\documentclass[pra,aps,twocolumn,twoside,superscriptaddress]{revtex4}




\usepackage{amsmath,amsfonts,amssymb,caption,color,epsfig,graphics,graphicx,hyperref,latexsym,mathrsfs,revsymb,theorem,url,verbatim,enumerate,epstopdf,tikz,float,multirow,booktabs,appendix}
\usepackage{ragged2e}
\hypersetup{colorlinks,linkcolor={blue},citecolor={red},urlcolor={blue}}
\usetikzlibrary{arrows, decorations.markings}

\tikzstyle{vecArrow} = [thick, decoration={markings,mark=at position
	1 with {\arrow[semithick]{open triangle 60}}},
double distance=1.4pt, shorten >= 5.5pt,
preaction = {decorate},
postaction = {draw,line width=1.4pt, white,shorten >= 4.5pt}]
\tikzstyle{innerWhite} = [semithick, white,line width=1.4pt, shorten >= 4.5pt]

\newtheorem{definition}{Definition}
\newtheorem{proposition}[definition]{Proposition}
\newtheorem{lemma}[definition]{Lemma}

\newtheorem{theorem}[definition]{Theorem}
\newtheorem{corollary}[definition]{Corollary}
\newtheorem{conjecture}[definition]{Conjecture}

\newtheorem{remark}[definition]{Remark}
\newtheorem{example}[definition]{Example}
\newtheorem{question}[definition]{Question}

\def\bcj{\begin{conjecture}}
	\def\ecj{\end{conjecture}}
\def\bcr{\begin{corollary}}
	\def\ecr{\end{corollary}}
\def\bd{\begin{definition}}
	\def\ed{\end{definition}}
\def\bea{\begin{eqnarray}}
\def\eea{\end{eqnarray}}
\def\bem{\begin{enumerate}}
	\def\eem{\end{enumerate}}
\def\bex{\begin{example}}
	\def\eex{\end{example}}
\def\bim{\begin{itemize}}
	\def\eim{\end{itemize}}
\def\bl{\begin{lemma}}
	\def\el{\end{lemma}}
\def\bma{\begin{bmatrix}}
	\def\ema{\end{bmatrix}}
\def\bpf{\begin{proof}}
	\def\epf{\end{proof}}
\def\bpp{\begin{proposition}}
	\def\epp{\end{proposition}}
\def\bqu{\begin{question}}
	\def\equ{\end{question}}
\def\br{\begin{remark}}
	\def\er{\end{remark}}
\def\bt{\begin{theorem}}
	\def\et{\end{theorem}}


\def\squareforqed{\hbox{\rlap{$\sqcap$}$\sqcup$}}
\def\qed{\ifmmode\squareforqed\else{\unskip\nobreak\hfil
		\penalty50\hskip1em\null\nobreak\hfil\squareforqed
		\parfillskip=0pt\finalhyphendemerits=0\endgraf}\fi}
\def\endenv{\ifmmode\;\else{\unskip\nobreak\hfil
		\penalty50\hskip1em\null\nobreak\hfil\;
		\parfillskip=0pt\finalhyphendemerits=0\endgraf}\fi}
\newenvironment{proof}{\noindent \textbf{{Proof.~} }}{\qed}
\def\Dbar{\leavevmode\lower.6ex\hbox to 0pt
	{\hskip-.23ex\accent"16\hss}D}
\makeatletter
\def\url@leostyle{%
	\@ifundefined{selectfont}{\def\UrlFont{\sf}}{\def\UrlFont{\small\ttfamily}}}
\makeatother
\urlstyle{leo}

\def\bcj{\begin{conjecture}}
	\def\ecj{\end{conjecture}}
\def\bcr{\begin{corollary}}
	\def\ecr{\end{corollary}}
\def\bd{\begin{definition}}
	\def\ed{\end{definition}}
\def\bea{\begin{eqnarray}}
\def\eea{\end{eqnarray}}
\def\bem{\begin{enumerate}}
	\def\eem{\end{enumerate}}
\def\bex{\begin{example}}
	\def\eex{\end{example}}
\def\bim{\begin{itemize}}
	\def\eim{\end{itemize}}
\def\bl{\begin{lemma}}
	\def\el{\end{lemma}}
\def\bpf{\begin{proof}}
	\def\epf{\end{proof}}
\def\bpp{\begin{proposition}}
	\def\epp{\end{proposition}}
\def\bqu{\begin{question}}
	\def\equ{\end{question}}
\def\br{\begin{remark}}
	\def\er{\end{remark}}
\def\bt{\begin{theorem}}
	\def\et{\end{theorem}}

\def\btb{\begin{tabular}}
	\def\etb{\end{tabular}}

\newcommand{\nc}{\newcommand}


\nc{\bbA}{\mathbb{A}} \nc{\bbB}{\mathbb{B}} \nc{\bbC}{\mathbb{C}}
\nc{\bbD}{\mathbb{D}} \nc{\bbE}{\mathbb{E}} \nc{\bbF}{\mathbb{F}}
\nc{\bbG}{\mathbb{G}} \nc{\bbH}{\mathbb{H}} \nc{\bbI}{\mathbb{I}}
\nc{\bbJ}{\mathbb{J}} \nc{\bbK}{\mathbb{K}} \nc{\bbL}{\mathbb{L}}
\nc{\bbM}{\mathbb{M}} \nc{\bbN}{\mathbb{N}} \nc{\bbO}{\mathbb{O}}
\nc{\bbP}{\mathbb{P}} \nc{\bbQ}{\mathbb{Q}} \nc{\bbR}{\mathbb{R}}
\nc{\bbS}{\mathbb{S}} \nc{\bbT}{\mathbb{T}} \nc{\bbU}{\mathbb{U}}
\nc{\bbV}{\mathbb{V}} \nc{\bbW}{\mathbb{W}} \nc{\bbX}{\mathbb{X}}
\nc{\bbZ}{\mathbb{Z}}


\nc{\bA}{{\bf A}} \nc{\bB}{{\bf B}} \nc{\bC}{{\bf C}}
\nc{\bD}{{\bf D}} \nc{\bE}{{\bf E}} \nc{\bF}{{\bf F}}
\nc{\bG}{{\bf G}} \nc{\bH}{{\bf H}} \nc{\bI}{{\bf I}}
\nc{\bJ}{{\bf J}} \nc{\bK}{{\bf K}} \nc{\bL}{{\bf L}}
\nc{\bM}{{\bf M}} \nc{\bN}{{\bf N}} \nc{\bO}{{\bf O}}
\nc{\bP}{{\bf P}} \nc{\bQ}{{\bf Q}} \nc{\bR}{{\bf R}}
\nc{\bS}{{\bf S}} \nc{\bT}{{\bf T}} \nc{\bU}{{\bf U}}
\nc{\bV}{{\bf V}} \nc{\bW}{{\bf W}} \nc{\bX}{{\bf X}}
\nc{\bZ}{{\bf Z}} \nc{\bv}{{\bf v}}
\nc{\ba}{{\bf a}} \nc{\be}{{\bf e}} \nc{\bu}{{\bf u}}
\nc{\brr}{{\bf r}}


\nc{\cA}{{\cal A}} \nc{\cB}{{\cal B}} \nc{\cC}{{\cal C}}
\nc{\cD}{{\cal D}} \nc{\cE}{{\cal E}} \nc{\cF}{{\cal F}}
\nc{\cG}{{\cal G}} \nc{\cH}{{\cal H}} \nc{\cI}{{\cal I}}
\nc{\cJ}{{\cal J}} \nc{\cK}{{\cal K}} \nc{\cL}{{\cal L}}
\nc{\cM}{{\cal M}} \nc{\cN}{{\cal N}} \nc{\cO}{{\cal O}}
\nc{\cP}{{\cal P}} \nc{\cQ}{{\cal Q}} \nc{\cR}{{\cal R}}
\nc{\cS}{{\cal S}} \nc{\cT}{{\cal T}} \nc{\cU}{{\cal U}}
\nc{\cV}{{\cal V}} \nc{\cW}{{\cal W}} \nc{\cX}{{\cal X}}
\nc{\cZ}{{\cal Z}}


\nc{\hA}{{\hat{A}}} \nc{\hB}{{\hat{B}}} \nc{\hC}{{\hat{C}}}
\nc{\hD}{{\hat{D}}} \nc{\hE}{{\hat{E}}} \nc{\hF}{{\hat{F}}}
\nc{\hG}{{\hat{G}}} \nc{\hH}{{\hat{H}}} \nc{\hI}{{\hat{I}}}
\nc{\hJ}{{\hat{J}}} \nc{\hK}{{\hat{K}}} \nc{\hL}{{\hat{L}}}
\nc{\hM}{{\hat{M}}} \nc{\hN}{{\hat{N}}} \nc{\hO}{{\hat{O}}}
\nc{\hP}{{\hat{P}}} \nc{\hR}{{\hat{R}}} \nc{\hS}{{\hat{S}}}
\nc{\hT}{{\hat{T}}} \nc{\hU}{{\hat{U}}} \nc{\hV}{{\hat{V}}}
\nc{\hW}{{\hat{W}}} \nc{\hX}{{\hat{X}}} \nc{\hZ}{{\hat{Z}}}

\nc{\hn}{{\hat{n}}}



























\def\ghz{\mathop{\rm GHZ}}



\def\rank{\mathop{\rm rank}}





\newcommand{\bra}[1]{\langle#1|}
\newcommand{\ket}[1]{|#1\rangle}

\newcommand{\ketbra}[2]{|#1\rangle\!\langle#2|}
\newcommand{\braket}[2]{\langle#1|#2\rangle}

\newcommand{\fl}[2]{\lfloor\frac{#1}{#2}\rfloor}



















\def\Dbar{\leavevmode\lower.6ex\hbox to 0pt
	{\hskip-.23ex\accent"16\hss}D}

\begin{document}
\title{Strong quantum nonlocality in $N$-partite systems}		
	\author{Fei Shi}
\email[]{shifei@mail.ustc.edu.cn}
\affiliation{School of Cyber Security,
	University of Science and Technology of China, Hefei, 230026, People's Republic of China}

\author{Zuo Ye}
\email[]{zyprince@mail.ustc.edu.cn}
\affiliation{School of Mathematical Sciences,
 University of Science and Technology of China, Hefei, 230026, People's Republic of China}

\author{Lin Chen}
\email[]{linchen@buaa.edu.cn}
\affiliation{LMIB(Beihang University), Ministry of Education, and School of Mathematical Sciences, Beihang University, Beijing 100191, China}
\affiliation{International Research Institute for Multidisciplinary Science, Beihang University, Beijing 100191, China}

\author{Xiande Zhang}
\email[]{Corresponding authors: drzhangx@ustc.edu.cn}
\affiliation{School of Mathematical Sciences,
	University of Science and Technology of China, Hefei, 230026, People's Republic of China}

\begin{abstract}
A set of multipartite orthogonal quantum states is strongly nonlocal if it is locally irreducible for every bipartition of the subsystems [\href{https://journals.aps.org/prl/abstract/10.1103/PhysRevLett.122.040403}{Phys. Rev. Lett. \textbf{122}, 040403 (2019)}].
Although this property has been shown in three-, four- and five-partite systems, the existence of strongly nonlocal sets in  $N$-partite systems remains unknown when $N\geq 6$. In this paper, we successfully show that a strongly nonlocal set of orthogonal entangled states exists in $(\bbC^d)^{\otimes N}$ for all $N\geq 3$ and $d\geq 2$, which for the first time reveals the strong quantum nonlocality in general $N$-partite systems. For $N=3$ or $4$ and  $d\geq 3$, we present a strongly nonlocal set consisting of genuinely entangled states, which has a smaller size than any known strongly nonlocal orthogonal product set.
Finally, we connect strong quantum nonlocality with local hiding of information as an application.
\end{abstract}

\maketitle

\section{Introduction}\label{sec:int}
Quantum nonlocality is one of the most important properties in quantum mechanics. The entangled states show Bell nonlocality for violating Bell-type inequalities \cite{horodecki2009quantum,brunner2014bell}.
A set of multipartite orthogonal quantum states is locally indistinguishable if it is not possible to optimally distinguish the states
by any sequence of local operations and classical communications (LOCC). This set also shows quantum nonlocality, which is different from Bell nonlocality.   Bennett \emph{et al.} firstly constructed a locally indistinguishable  orthogonal product basis (OPB) in  $\bbC^3\otimes\bbC^3$ \cite{bennett1999quantum},  which shows the phenomenon of quantum nonlocality without entanglement.
Later, locally indistinguishable orthogonal product sets (OPSs) and  orthogonal entangled sets (OESs) have been widely studied \cite{1,2,3,4,5,6,7,8,9,10,11,12,13,14,15,16,17}. When information is encoded in a locally indistinguishable set of a composite quantum system,  it cannot be completely retrieved under LOCC among the spatially separated subsystems.  Consequently, local indistinguishability can be used for quantum data hiding \cite{terhal2001hiding,divincenzo2002quantum,eggeling2002hiding,Matthews2009Distinguishability} and quantum secret sharing \cite{Markham2008Graph,Hillery,Rahaman}.

Recently, Halder \emph{et al.} introduced the concepts of local irreducibility and strong quantum nonlocality \cite{Halder2019Strong}. A set of multipartite orthogonal states is locally irreducible if it is not possible to eliminate one or more states from the set by orthogonality-preserving local measurements. A locally irreducible set must be a locally indistinguishable set,  while the converse is not true in general.  Further, a set of multipartite orthogonal states is strongly nonlocal if it is locally irreducible for each  bipartition of the subsystems.
The authors of Ref.~\cite{Halder2019Strong} showed the phenomenon of strong quantum nonlocality without entanglement, by presenting two strongly nonlocal OPBs  in $3\otimes 3\otimes 3$ and $4\otimes 4\otimes 4$, respectively.  Later, there were several results for strongly nonlocal OPSs and OESs. For OPSs,  Yuan \emph{et al.} presented some strongly nonlocal OPSs in $\bbC^d\otimes \bbC^d\otimes \bbC^d$, $\bbC^d\otimes \bbC^d\otimes \bbC^{d+1}$, $\bbC^3\otimes \bbC^3\otimes\bbC^3\otimes \bbC^3$, and $\bbC^4\otimes \bbC^4\otimes\bbC^4\otimes \bbC^4$. \cite{yuan2020strong}.  Recently, strongly nonlocal OPSs in general three-, four-, and five-partite systems,  and  strongly nonlocal unextendible product bases (UPBs) in general three-, and four-partite systems   were constructed  \cite{shi2021hyper,shi2021,shi2021upb}. For OESs, based on the Rubik’s cube, Shi \emph{et al.} presented  some strongly nonlocal OESs in $\bbC^d\otimes \bbC^d\otimes \bbC^d$  \cite{2020Strong} , while these OESs are not orthogonal genuinely entangled sets (OGESs). By using graph connectivity, Wang \emph{et al.}    successfully constructed  strongly nonlocal OGESs in $\bbC^d\otimes \bbC^d\otimes \bbC^d$ \cite{li2}.  The concept of strong quantum nonlocality was also extended to more general settings \cite{zhangstrong2019,Sumit2019Genuinely,halder2020dis,li2020local,zhang2020strong}.

Whether we use OPS or OES, the existence of strongly nonlocal sets are limited to three-, four-, and five-partite systems up to now. It is still unknown whether strong quantum nonlocality can be shown in general $N$-partite systems. So it is natural to ask the following question.
\begin{itemize}
  \item[Q:] Can we show the strong quantum nonlocality  in $(\bbC^d)^{\otimes N}$ for $d\geq 2$ and for all $N\geq 3$?
\end{itemize}


The main method of showing the strong nonlocality of a set is to show a stronger property, that is, any orthogonality-preserving local measurement that may be performed across any
bipartition of the subsystems must be trivial.
Although it suffices to show that the condition holds for any part consisting of $N-1$ subsystems \cite{shi2021,li2}, it is not easy to prove that any joint orthogonality-preserving local measurement performed on any set of $N-1$ subsystems must be trivial
when $N$ is  large. To reduce the complexity,  we wish to construct a set of orthogonal states which has a similar structure when restricted on any $N-1$ subsystems. However, it is again not easy to give an explicit form of the states satisfying this condition, which is important in the verification of triviality of the measurement.

One of the main contribution of this paper is to answer the question Q in an affirmative way. In fact, by using cyclic permutation group action, we construct a strongly nonlocal OES of size $d^N-(d-1)^N+1$  in  $(\bbC^d)^{\otimes N}$ for all $d\geq 2$ and $N\geq 3$. Based on this construction, we further show that when $N=3$ and $4$, a strongly nonlocal OGES of size $d^N-(d-1)^N+1$  exists in $(\bbC^d)^{\otimes N}$ for all $d\geq 2$. In Ref.~\cite{li2}, the authors asked whether we can construct a strongly nonlocal set via OGES that has a smaller size  than that via OPS in the same system.
The question was raised due to the intuition that OGESs show more strong nonlocality than OPSs \cite{Halder2019Strong}, but known strongly nonlocal sets via OGESs  \cite{li2} have larger sizes than that via OPSs \cite{yuan2020strong}. When $N=3$, the size $3d^2-3d+2$ of OGES in our construction is about half of the size $6(d-1)^2$ of the OPS in Ref. \cite{yuan2020strong}.
Finally, for applications, we show that strong quantum nonlocality can be used for local hiding of information.

The rest of this paper is organized as follows. In Sec.~\ref{sec:pre}, we introduce the concept of strong  nonlocality. In Sec.~\ref{sec:OESs}, we construct OESs in $(\bbC^d)^{\otimes N}$ for $d\geq 2$ and $N\geq 3$ from cyclic permutation group actions, and show the strong nonlocality  in Sec.~\ref{sec:strong_OES}. Next, in Sec.~\ref{sec:strong_OGES}, we present strongly nonlocal OGESs when $N=3$ and $4$. In Sec.~\ref{sec:hiding}, we connect strong quantum nonlocality with local hiding of information.  Finally, we conclude in Sec.~\ref{sec:con}.

\section{Preliminaries}\label{sec:pre}
	Throughout this paper, we only consider pure states, and we do not normalize states for simplicity. A positive operator-valued measure (POVM) on Hilbert space $\cH$ is a set of semidefinite operators $\{E_m=M_m^{\dagger}M_m\}$ such that  $\sum_m E_m=\bbI_{\cH}$, where each $E_m$ is called a POVM element, and  $\bbI_{\cH}$ is the identity operator on $\cH$. We only consider POVM measurements. A measurement is trivial if all its POVM elements are
	proportional to the identity operator. Otherwise, the measurement is called nontrivial.

For an integer $d\geq 2$,  we denote $\bbZ_d:=\{0,1,\ldots,d-1\}$,  $\bbZ_d^N:=\bbZ_d\times \bbZ_d\times \cdots\times\bbZ_d$ and $(\bbC^d)^{\otimes N}:=\bbC^d\otimes\bbC^d\otimes\cdots\otimes\bbC^d$, where $\bbZ_d$ and $\bbC^d$ both repeat $N$ times. We assume that $\{\ket{i}\}_{i\in\bbZ_d}$ is a computational basis of $\bbC^d$. For a bipartite state $\ket{\psi}_{AB}\in\bbC^m\otimes\bbC^n$, it can be expressed by
	\begin{equation}
	\ket{\psi}_{AB}=\sum_{i\in\bbZ_m,j\in\bbZ_n}a_{i,j}\ket{i}_A\ket{j}_B.
	\end{equation}
	Then $\ket{\psi}_{AB}$ corresponds to an $m\times n$ matrix $M=(a_{i,j})_{i\in\bbZ_m,j\in\bbZ_n}$. We denote
	\begin{equation}
	\rank(\ket{\psi}_{AB})=\rank(M),
	\end{equation}
	which is also called the \emph{Schmidt rank} of $\ket{\psi}_{AB}$ \cite{computation2010}. Then $\ket{\psi}_{AB}$ is an entangled state if and only if  $\rank(\ket{\psi}_{AB})\geq 2$. An $N$-partite state $\ket{\psi}_{A_1A_2\cdots A_N}$ is called an \emph{entangled state}, if it is entangled for at least one  bipartition of the subsystems $\{A_1,A_2,\ldots,A_N\}$. Moreover, an $N$-partite state $\ket{\psi}_{A_1A_2\cdots A_N}$ is called a \emph{genuinely entangled state}, if it is entangled for each bipartition of the subsystems $\{A_1,A_2,\ldots,A_N\}$.

	The most well known genuinely entangled states are $\ghz$ states and W states.
	An $N$-qudit $\ghz$ state  can be expressed by 		
	\begin{equation}
	\ket{\ghz}_d^N=\sum_{i\in\bbZ_d}\ket{i}_{A_1}\ket{i}_{A_2}\cdots\ket{i}_{A_N}.
	\end{equation}
	An $N$-qubit W state can be expressed by
	\begin{equation}
	\begin{aligned}
	\ket{W}_2^N=&\ket{1}_{A_1}\ket{0}_{A_2}\cdots \ket{0}_{A_N}+\ket{0}_{A_1}\ket{1}_{A_2}\cdots \ket{0}_{A_N}\\
	&+\cdots+\ket{0}_{A_1}\ket{0}_{A_2}\cdots \ket{1}_{A_N}.
	\end{aligned}
	\end{equation}

In local state discrimination, we usually perform orthogonality-preserving local measurements (OPLMs) for a set of orthogonal states, where a measurement is orthogonality preserving if the postmeasurement states keep being mutually orthogonal. Recently, Halder \emph{et al.} proposed the concepts of locally irreducible sets and strong quantum nonlocality \cite{Halder2019Strong}. A set of orthogonal states in $(\bbC^d)^{\otimes N}$ is \emph{locally irreducible} if it is not possible to eliminate one or more states from the set by OPLMs.  Moreover,  A set of orthogonal states in $(\bbC^d)^{\otimes N}$ is \emph{strongly nonlocal}, if it is locally irreducible for each bipartition of the subsystems. In fact, a stronger property is often applied to show  the strong nonlocality \cite{shi2021,li2}.
A set of orthogonal states  in $(\bbC^d)^{\otimes N}$  is said to have  the property of the strongest nonlocality when the following condition holds, any OPLM that may be performed across any
bipartition must be trivial. The following lemma shows that we only need to show that any $N-1$ parties can perform only a trivial OPLM \cite{shi2021}.

	\begin{lemma}\cite{shi2021}\label{lem:cyc}
	Let $\cS:=\{\ket{\psi_j}\}$ be a set of orthogonal states in a multipartite system $\otimes_{i=1}^{N}\cH_{A_i}$. For each $i=1,2,\ldots,N$, define $B_i=\{A_1A_2\ldots A_N\}\setminus \{A_i\}$ be the joint party of all but the $i$th party. Then the set  $\cS$  has  the property of the strongest nonlocality if the following condition holds for any $1\leq i\leq N$: if party $B_i$  performs any OPLM, then the OPLM is trivial.
   \end{lemma}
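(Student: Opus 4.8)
The plan is to reduce the triviality of an OPLM performed across an \emph{arbitrary} bipartition to the $N$ special bipartitions of the form $\{A_i\}\,|\,B_i$ named in the hypothesis. First I would unwind the definition of the strongest nonlocality: the set $\cS$ has this property precisely when, for every nonempty proper subset $S\subsetneq\{A_1,\ldots,A_N\}$, any OPLM performed by the joint party $S$ on $\otimes_{j\in S}\cH_{A_j}$ is trivial. Indeed, ranging over all bipartitions $\{S,S^c\}$ and allowing either of the two sides to be the measuring party covers exactly these subsets $S$. Since each $B_i$ is an $(N-1)$-element subset, the hypothesis already disposes of the largest proper subsets, so the real content is that every smaller measuring party inherits triviality.

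Next I would fix an arbitrary nonempty proper subset $S$ and an OPLM $\{E_m=M_m^\dagger M_m\}$ performed by $S$, and choose an index $i$ with $A_i\notin S$ (possible since $S$ is proper), so that $S\subseteq B_i$. The key device is to \emph{lift} the measurement from $S$ to the larger party $B_i$ by tensoring with the identity on the intermediate subsystems: set $\widetilde E_m:=E_m\otimes\bbI_{B_i\setminus S}$, an operator on $\otimes_{j\in B_i}\cH_{A_j}$. I would then verify that $\{\widetilde E_m\}$ is a genuine POVM for $B_i$, since $\sum_m\widetilde E_m=(\sum_m E_m)\otimes\bbI=\bbI_{B_i}$.

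The decisive observation is that, viewed as an operation on the full space $\otimes_{k=1}^{N}\cH_{A_k}$, the lifted operator $M_m\otimes\bbI_{B_i\setminus S}\otimes\bbI_{A_i}$ coincides with the original operator $M_m\otimes\bbI_{S^c}$, because $S^c=(B_i\setminus S)\cup\{A_i\}$. Hence the two families of post-measurement states $\{(M_m\otimes\bbI_{S^c})\ket{\psi_j}\}$ are literally the same, so $\{\widetilde E_m\}$ preserves orthogonality exactly when $\{E_m\}$ does. Applying the hypothesis to $B_i$, the lifted OPLM must be trivial, i.e.\ $\widetilde E_m=E_m\otimes\bbI_{B_i\setminus S}\propto\bbI_{B_i}$ for every $m$. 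Finally I would cancel the identity tensor factor: from $E_m\otimes\bbI_{B_i\setminus S}\propto\bbI_S\otimes\bbI_{B_i\setminus S}$ one concludes $E_m\propto\bbI_S$, so $S$'s measurement is itself trivial.

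The only genuine subtlety, and the step I would state most carefully, is this pair of equivalences: that orthogonality-preservation survives the lift, and that triviality of $E_m\otimes\bbI$ forces triviality of $E_m$; everything else is bookkeeping about which subsystems carry identities. I expect no obstacle beyond making sure the degenerate case $|S|=N-1$ (where $S=B_i$ and the lift is vacuous) is subsumed, so that the argument uniformly covers all proper subsets and hence all bipartitions.
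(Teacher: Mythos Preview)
Your argument is correct and is essentially the standard one. Note, however, that the paper itself does not supply a proof of this lemma: it is stated with a citation to \cite{shi2021} and used as a black box. Your lift-by-identity argument---embed the measurement on an arbitrary proper subset $S$ into a measurement on some $B_i\supseteq S$ by tensoring with $\bbI_{B_i\setminus S}$, observe that the global action and hence orthogonality-preservation are unchanged, invoke the hypothesis to get $E_m\otimes\bbI\propto\bbI$, and cancel the identity factor---is exactly the proof one finds in the cited reference, so there is nothing further to compare.
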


	\section{OESs from cyclic permutation group action}\label{sec:OESs}
	In this section, we construct OESs from cyclic permutation group action. We first briefly recall the concept and properties of group action \cite{rotman2010advanced}. If $X$ is a set and $G$ is a group, then $G$ $\emph{acts}$ on $X$ if there is a function $G\times X\rightarrow X$, denoted by $(g,x)\rightarrowtail gx$, such that
	\begin{enumerate}[(i)]
		\item $(gh)x=g(hx)$ for all $g,h\in G$ and $x\in X$;
		\item  $1x=x$ for all $x\in X$, where $1$ is the identity in $G$.
	\end{enumerate}
	For $x\in X$, the \emph{orbit} of $x$, denoted by $\cO_x$, is the subset
	\begin{equation}
	\cO_x=\{gx\mid g\in G\}\subseteq X,
	\end{equation} and $x$ is called a \emph{representative} of the orbit $\cO_x$.
	If $y\in \cO_x$, then $\cO_x=\cO_y$;  if $y\notin \cO_x$, then $\cO_x\cap\cO_y=\emptyset$. Thus, $X$ is the union of the mutually disjoint orbits,
	\begin{equation}\label{eq:orbit}
	X=\mathop{\bigcup}\limits_x\cO_x,
	\end{equation}
	where  $x$ runs over the set of representatives of all orbits. Moreover, $|\cO_x|$ divides $|G|$.

In our construction, the set we consider is the subset $\bbX_d^N$ of $\bbZ_d^N$,  which is defined by
	\begin{equation}
	\bbX_d^N:=\{(i_1,i_2,\ldots,i_N)\mid \mathop{\prod}\limits_{1\leq k\leq N} i_k=0\}.
	\end{equation}
	That is, for any $(i_1,i_2,\ldots,i_N)\in \bbX_d^N$, there exists at least one $i_k=0$ for $1\leq k\leq N$. Then $|\bbX_d^N|=d^N-(d-1)^N$.

 Assume that
 \begin{equation}
     G_N=\{\sigma^k\mid k\in\bbZ_N\}
 \end{equation} is a cyclic permutation group of order $N$, where
	\begin{equation}
	\sigma(i_1,i_2,\ldots,i_N)=(i_2,\ldots,i_N,i_1)
	\end{equation}
	for an $N$-tuple $(i_1,i_2,\ldots,i_N)\in \bbZ_d^N$.  Then $G_N$ acts on $\bbX_d^N$ by definition, and yields a partition of $\bbX_d^N$ into disjoint orbits.

	For example, since $G_3$ acts on  $\bbX_2^3=\bbZ_2^3\setminus\{(1,1,1)\}$, then
	\begin{equation}\label{eq:orbit_222}
	\bbX_2^3=\cO_{(0,0,0)}\cup\cO_{(0,0,1)}\cup\cO_{(0,1,1)},
	\end{equation} where
	\begin{equation}
	\begin{aligned}
	\cO_{(0,0,0)}&=\{(0,0,0)\},\\
	\cO_{(0,0,1)}&=\{(0,0,1),(0,1,0),(1,0,0)\},\\
	\cO_{(0,1,1)}&=\{(0,1,1),(1,1,0),(1,0,1)\}.\\
	\end{aligned}
	\end{equation}

\begin{figure}[t]
	\centering
	\includegraphics[scale=1]{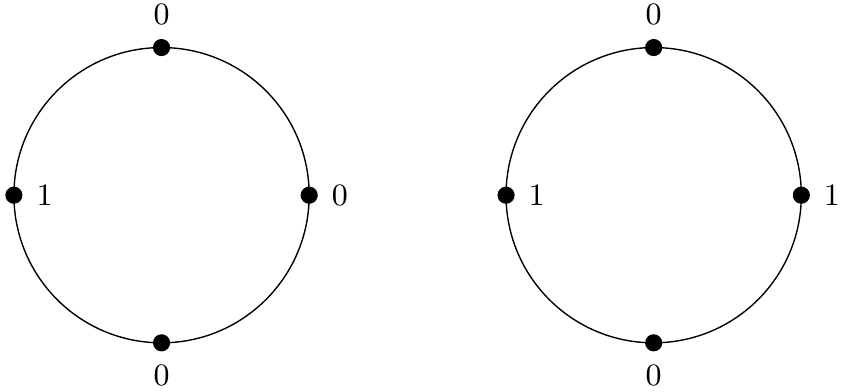}
	\caption{We represent two orbits by two cycles. For the left cycle, the orbit is $\cO_{(0,0,0,1)}=\{(0,0,0,1),$ $(0,0,1,0),(0,1,0,0),(1,0,0,0)\}$, which yields the set of states $\cS_{(0,0,0,1)}=\{\ket{0}_{A_1}\ket{0}_{A_2}\ket{0}_{A_3}\ket{1}_{A_4}+
	w_4^{s}\ket{0}_{A_1}\ket{0}_{A_2}\ket{1}_{A_3}\ket{0}_{A_4}
	+w_4^{2s}\ket{0}_{A_1}\ket{1}_{A_2}\ket{0}_{A_3}\ket{0}_{A_4}+
	w_4^{3s}\ket{1}_{A_1}\ket{0}_{A_2}\ket{0}_{A_3}\ket{0}_{A_4}
	: s\in\bbZ_4\}$.  For the right cycle, the orbit is $\cO_{(0,1,0,1)}=\{(0,1,0,1),(1,0,1,0)\}$,  which yields the set of states  $\cS_{(0,1,0,1)}=\{\ket{0}_{A_1}\ket{1}_{A_2}\ket{0}_{A_3}\ket{1}_{A_4}\pm\ket{1}_{A_1}\ket{0}_{A_2}\ket{1}_{A_3}\ket{0}_{A_4}\}.$} \label{fig:cycle}
\end{figure}

	In general if $x\neq (0,0\ldots,0)$, then we can write
	\begin{equation}\label{eq:O_x}
	\cO_{x}=\{(i_1^{(j)},i_2^{(j)},\ldots,i_N^{(j)})\mid j\in\bbZ_k\},
	\end{equation}
where $k\geq 2$ is the size of  $\cO_x$. Note that the size  may be different for different orbits, see  Fig.~\ref{fig:cycle} as an example.

For each orbit $\cO_x$ of $\bbX_d^N$,
  we define a set $\cS_x$ of states in $(\bbC^d)^{\otimes N}$ as follows. If $x\neq (0,0\ldots,0)$, let
	\begin{equation}\label{eq:S_x}
	\cS_x:=\{\sum_{j\in\bbZ_k}w_k^{sj}\ket{i_1^{(j)}}_{A_1}\ket{i_2^{(j)}}_{A_2}\cdots\ket{i_N^{(j)}}_{A_N}: s\in\bbZ_k \},
	\end{equation}
	where $w_k=e^{\frac{2\pi\sqrt{-1}}{k}}$ is a primitive $k$th root of unity. In fact, the coefficient matrix
	\begin{equation}\label{eq:B}
B=(w_k^{sj})_{s,j\in\bbZ_k}
\end{equation}
 forms a complex Hadamard matrix of order $k$. By definition, the number of states in $\cS_x$ equals the size of $\cO_x$ in this case. See  Fig.~\ref{fig:cycle} for $\cS_{(0,0,0,1)}$ and $\cS_{(0,1,0,1)}$.
If $\cO_x=\{(0,0,\ldots,0)\}$, we define
	\begin{equation}\label{eq:S_0}
	\cS_{(0,0,\ldots,0)}:=\{\ket{0}_{A_1}\ket{0}_{A_1}\cdots\ket{0}_{A_N}\pm\ket{1}_{A_1}\ket{1}_{A_1}\cdots\ket{1}_{A_N}\},
	\end{equation}which has one more element than the orbit.  Since $\bbX_d^N=\mathop{\bigcup}\limits_x\cO_x$ is a disjoint union, thus
	\begin{equation}\label{eq:dN}
	\cB_d^N:=\mathop{\bigcup}\limits_x \cS_x
	\end{equation} is also a disjoint union, which has size $d^N-(d-1)^N+1$.

	For example, in $(\bbC^2)^{\otimes 3}$, we know that
	\begin{equation}\label{eq:222}
	\cB_2^3= \cS_{(0,0,0)}\cup\cS_{(0,0,1)}\cup \cS_{(0,1,1)},
	\end{equation} by Eq.~\eqref{eq:orbit_222}, where
	\begin{equation}
	\begin{aligned}
	S_{(0,0,0)}=\{&\ket{0}_{A_1}\ket{0}_{A_2}\ket{0}_{A_3}\pm\ket{1}_{A_1}\ket{1}_{A_2}\ket{1}_{A_3}\},\\
	S_{(0,0,1)}=\{&\ket{0}_{A_1}\ket{0}_{A_2}\ket{1}_{A_3}+w_3^{s}\ket{0}_{A_1}\ket{1}_{A_2}\ket{0}_{A_3}\\&+w_3^{2s}\ket{1}_{A_1}\ket{0}_{A_2}\ket{0}_{A_3}: s\in\bbZ_3\},\\
	S_{(0,1,1)}=\{&\ket{0}_{A_1}\ket{1}_{A_2}\ket{1}_{A_3}+w_3^{s}\ket{1}_{A_1}\ket{1}_{A_2}\ket{0}_{A_3}\\&+w_3^{2s}\ket{1}_{A_1}\ket{0}_{A_2}\ket{1}_{A_3}: s\in\bbZ_3\},\\
	\end{aligned}
	\end{equation}
	and  $|\cB_2^3|=8$. One can easily check that  $\cB_2^3$ is an OES in $\bbC^2\otimes\bbC^2\otimes\bbC^2$.
	
   Similarly, we can construct $\cB_2^4$ in $\bbC^2\otimes\bbC^2\otimes\bbC^2\otimes\bbC^2$.
   Since
   \begin{equation}
   \begin{aligned}
   \bbX_2^4=&\cO_{(0,0,0,0)}\cup\cO_{(0,0,0,1)}\cup\cO_{(0,0,1,1)}\cup\cO_{(0,1,0,1)}\\
   &\cup\cO_{(0,1,1,1)},
         \end{aligned}
   \end{equation}
   we obtain that
      \begin{equation}\label{eq:B_24}
      \begin{aligned}
       \cB_2^4=&\cS_{(0,0,0,0)}\cup\cS_{(0,0,0,1)}\cup\cS_{(0,0,1,1)}\cup\cS_{(0,1,0,1)}\\
       &\cup\cS_{(0,1,1,1)},
    \end{aligned}
   \end{equation}
   where
   \begin{equation}
	\begin{aligned}
	S_{(0,0,0,0)}=\{&\ket{0}_{A_1}\ket{0}_{A_2}\ket{0}_{A_3}\ket{0}_{A_4}\\&
	\pm\ket{1}_{A_1}\ket{1}_{A_2}\ket{1}_{A_3}\ket{1}_{A_4}\},\\
	S_{(0,0,0,1)}=\{&\ket{0}_{A_1}\ket{0}_{A_2}\ket{0}_{A_3}\ket{1}_{A_4}\\&+w_4^{s}\ket{0}_{A_1}\ket{0}_{A_2}\ket{1}_{A_3}\ket{0}_{A_4}\\&+w_4^{2s}\ket{0}_{A_1}\ket{1}_{A_2}\ket{0}_{A_3}\ket{0}_{A_4}\\&+w_4^{3s}\ket{1}_{A_1}\ket{0}_{A_2}\ket{0}_{A_3}\ket{0}_{A_4}: s\in\bbZ_4\},\\
	S_{(0,0,1,1)}=\{&\ket{0}_{A_1}\ket{0}_{A_2}\ket{1}_{A_3}\ket{1}_{A_4}\\&+w_4^{s}\ket{0}_{A_1}\ket{1}_{A_2}\ket{1}_{A_3}\ket{0}_{A_4}\\&+w_4^{2s}\ket{1}_{A_1}\ket{1}_{A_2}\ket{0}_{A_3}\ket{0}_{A_4}\\&+w_4^{3s}\ket{1}_{A_1}\ket{0}_{A_2}\ket{0}_{A_3}\ket{1}_{A_4}: s\in\bbZ_4\},\\
	S_{(0,1,0,1)}=\{&\ket{0}_{A_1}\ket{1}_{A_2}\ket{0}_{A_3}\ket{1}_{A_4}\\&
	\pm\ket{1}_{A_1}\ket{0}_{A_2}\ket{1}_{A_3}\ket{0}_{A_4}\},\\
	S_{(0,1,1,1)}=\{&\ket{0}_{A_1}\ket{1}_{A_2}\ket{1}_{A_3}\ket{1}_{A_4}\\&+w_4^{s}\ket{1}_{A_1}\ket{1}_{A_2}\ket{1}_{A_3}\ket{0}_{A_4}\\&+w_4^{2s}\ket{1}_{A_1}\ket{1}_{A_2}\ket{0}_{A_3}\ket{1}_{A_4}\\&+w_4^{3s}\ket{1}_{A_1}\ket{0}_{A_2}\ket{1}_{A_3}\ket{1}_{A_4}: s\in\bbZ_4\},\\
	\end{aligned}
	\end{equation}	
	and $|\cB_2^4|=16$. We can also check that $\cB_2^4$ is an OES in $\bbC^2\otimes\bbC^2\otimes\bbC^2\otimes\bbC^2$.
	In general, we have the following result.
	
	\begin{lemma}\label{lem:B_d} The set $\cB_d^N$ is an OES of size $d^N-(d-1)^N+1$
		in $(\bbC^d)^{\otimes N}$.
	\end{lemma}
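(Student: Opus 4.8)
The plan is to verify the three defining properties of an OES in turn: cardinality, pairwise orthogonality, and entanglement of every state. The size claim is immediate from the construction. Since $\cB_d^N=\bigcup_x \cS_x$ is a disjoint union over orbit representatives, with $|\cS_x|=|\cO_x|$ for every $x\neq(0,\ldots,0)$ while $|\cS_{(0,\ldots,0)}|=2=|\cO_{(0,\ldots,0)}|+1$, I obtain $|\cB_d^N|=|\bbX_d^N|+1=d^N-(d-1)^N+1$, using $|\bbX_d^N|=d^N-(d-1)^N$.

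For orthogonality I would distinguish two cases. First, states coming from different orbits are orthogonal because their supports are disjoint sets of computational basis vectors: for $x\neq(0,\ldots,0)$ the support of every state in $\cS_x$ is exactly $\cO_x\subseteq\bbX_d^N$ (all Hadamard coefficients $w_k^{sj}$ are nonzero), while $\cS_{(0,\ldots,0)}$ is supported on $\{(0,\ldots,0),(1,\ldots,1)\}$ with $(1,\ldots,1)\notin\bbX_d^N$, and distinct orbits are disjoint by Eq.~\eqref{eq:orbit}. Second, two distinct states $\ket{\psi_s},\ket{\psi_{s'}}$ within the same $\cS_x$ (for $x\neq(0,\ldots,0)$) satisfy $\braket{\psi_{s'}}{\psi_s}=\sum_{j\in\bbZ_k}w_k^{(s-s')j}$, since the basis vectors indexed by distinct $j$ are orthonormal; this vanishes for $s\neq s'$ as a full period of a nontrivial $k$th root of unity, which is precisely the row-orthogonality of the complex Hadamard matrix $B$ in Eq.~\eqref{eq:B}. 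The two states in $\cS_{(0,\ldots,0)}$ are orthogonal by direct computation.

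The substantive part is entanglement. By the paper's definition, a pure $N$-partite state fails to be entangled only when it is a full product $\ket{\phi_1}\otimes\cdots\otimes\ket{\phi_N}$, whose support is then the combinatorial box $\supp(\phi_1)\times\cdots\times\supp(\phi_N)$; so it suffices to show that the support of each state is \emph{not} a product set. For $\cS_{(0,\ldots,0)}$ this is clear, as $\{(0,\ldots,0),(1,\ldots,1)\}$ is not a box (these are GHZ states). For $x\neq(0,\ldots,0)$ the key observation is that the support $\cO_x$ is invariant under the cyclic shift $\sigma$. The plan is: assume $\cO_x=S_1\times\cdots\times S_N$; applying $\sigma$ gives $\sigma(\cO_x)=S_2\times\cdots\times S_N\times S_1$, and comparing the coordinate projections of the equal sets $\cO_x=\sigma(\cO_x)$ forces $S_1=S_2=\cdots=S_N=:S$, so $\cO_x=S^N$. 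Since $x\in\bbX_d^N$ has a zero entry, $0\in S$, and since $x\neq(0,\ldots,0)$ there is a nonzero $a\in S$; then $(a,\ldots,a)\in S^N=\cO_x\subseteq\bbX_d^N$, contradicting that $(a,\ldots,a)$ has no zero coordinate. Hence $\cO_x$ is not a product set and $\ket{\psi_s}$ is entangled.

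I expect the cyclic-invariance argument of the last paragraph to be the main obstacle. The size and orthogonality reduce to the root-of-unity and Hadamard structure already highlighted in the construction, whereas ruling out that an orbit state factorizes requires the structural fact that $\sigma$-invariance of a combinatorial box forces the form $S^N$, which then collides with the defining constraint of $\bbX_d^N$.
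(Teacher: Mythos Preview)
Your proof is correct; the size and orthogonality parts are essentially identical to the paper's (you are in fact slightly more careful about the stray basis vector $(1,\ldots,1)$ supporting $\cS_{(0,\ldots,0)}$). The entanglement argument, however, takes a genuinely different route. The paper works in the single bipartition $A_1\mid A_2\cdots A_N$ and proves Schmidt rank $\geq 2$ via two elementary claims: (i) the last $N-1$ coordinates of the orbit elements are pairwise distinct, and (ii) the first coordinate of orbit elements takes both the value $0$ and some nonzero value. Your approach instead rules out the full-product form through a support argument: a product state has box support, $\sigma$-invariance of $\cO_x$ forces any such box to equal $S^N$, and then $(a,\ldots,a)\in S^N=\cO_x\subseteq\bbX_d^N$ with $a\neq 0$ is a contradiction. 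Your argument is shorter and makes more transparent use of the cyclic symmetry; the paper's argument is more pedestrian but yields the explicit extra information that every state is entangled across $A_1\mid A_2\cdots A_N$, a fact the paper later invokes (end of Sec.~\ref{sec:strong_OGES}) when discussing one-versus-rest entanglement of $\cB_d^N$.
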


The proof of Lemma~\ref{lem:B_d} is given in Appendix~\ref{appendix:lem:B_d}.	
We will further show that $\cB_d^3$ is an OGES, while $\cB_d^4$ is not  in Section~\ref{sec:strong_OGES}. However, we first show that $\cB_d^N$ is strongly nonlocal  in the next section.

\section{The strong nonlocality for OESs in $N$-partite systems}\label{sec:strong_OES}
In this section, we show that the set $\cB_d^N$ of states in Eq.~\eqref{eq:dN} is strongly nonlocal  in $(\bbC^{d})^{\otimes N}$ for all $d\geq 2$ and $N\geq 3$. Since $\cB_d^N$ has a similar structure under the cyclic permutation of the subsystems $\{A_1,A_2,\ldots,A_N\}$, we only need to show that the party $A_2A_3\ldots A_N$ can only perform a trivial OPLM for $\cB_d^N$ by Lemma~\ref{lem:cyc}. We give Lemma~\ref{lemma:zero} and Lemma~\ref{lemma:trivial} in Appendix~\ref{appendix:two_lemmas}, which are useful for showing the strong nonlocality for OESs. 	 These  two lemmas can greatly reduce the complexity of proof when they are applied to $\{\cS_x\}$  of $\cB_d^N$.	First, we give two  examples to illustrate the idea of proof.	
	\begin{example}
		In $\bbC^2\otimes \bbC^2\otimes \bbC^2$, the set $\cB_2^3$ given by Eq.~\eqref{eq:222} is a strongly nonlocal OES of size $8$.
	\end{example}
	\begin{proof}
		Let $A_2$ and $A_3$ come together to  perform a joint  OPLM $\{E=M^{\dagger}M\}$, where each POVM
		element can be written as a $4\times 4$ matrix in the basis $\{\ket{i}_{A_2}\ket{j}_{A_3}\}_{i,j\in \bbZ_2}$,
		\begin{equation}
		E=\begin{pmatrix}
		a_{00,00} &a_{00,01} &a_{00,10}  &a_{00,11} \\
		a_{01,00} &a_{01,01} &a_{01,10}  &a_{01,11} \\
		a_{10,00} &a_{10,01} &a_{10,10}  &a_{10,11} \\
		a_{11,00} &a_{11,01} &a_{11,10}  &a_{11,11} \\
		\end{pmatrix}.
		\end{equation}
		Then the postmeasurement states $\{\bbI_{A_1}\otimes M\ket{\psi}:\ket{\psi}\in \cB_2^3\}$ should be mutually orthogonal. That is
		\begin{equation}
		\bra{\psi}\bbI_{A_1}\otimes E\ket{\phi}=0
		\end{equation}for $\ket{\psi}\neq \ket{\phi}\in\cB_2^3$. By using these orthogonality relations, we need to show that $E\propto \bbI$.
		
		Since $E=E^{\dagger}$, $a_{ij,k\ell}=0$ can imply that $a_{k\ell,ij}=0$. 	Applying Lemma~\ref{lemma:zero} to $\cS_{(0,0,0)}$ and $\cS_{(0,0,1)}$, we must have $a_{00,01}=a_{00,10}=a_{00,11}=0$. Next, applying Lemma~\ref{lemma:zero} to $\cS_{(0,0,0)}$ and  $\cS_{(0,1,1)}$, we obtain that $a_{10,11}=a_{01,11}=0$. Since $a_{00,01}=a_{00,10}=0$, we can apply Lemma~\ref{lemma:trivial} to $\cS_{(0,0,1)}$. Then we obtain that $a_{01,10}=0$ and $a_{01,01}=a_{10,10}=a_{00,00}$. Finally, applying Lemma~\ref{lemma:trivial} to $\cS_{(0,0,0)}$, we have $a_{00,00}=a_{11,11}$. Thus $E\propto \bbI$, and this completes the proof.
	\end{proof}	
	\vspace{0.4cm}

	\begin{example}
		In $\bbC^2\otimes \bbC^2\otimes \bbC^2\otimes\bbC^2$, the set $\cB_2^4$ given by Eq.~\eqref{eq:B_24} is a strongly nonlocal OES of size $16$.
	\end{example}
	\begin{proof}
		Let $A_2$, $A_3$ and $A_4$ come together to  perform a joint  OPLM $\{E=M^{\dagger}M\}$, where each POVM
		element can be written as a $8\times 8$ matrix in the basis $\{\ket{i}_{A_2}\ket{j}_{A_3}\ket{k}_{A_4}\}_{i,j,k\in \bbZ_2}$, $E=(a_{ijk,\ell mn})_{i,j,k,\ell,m,n\in\bbZ_2}$.
		Then the postmeasurement states $\{\bbI_{A_1}\otimes M\ket{\psi}:\ket{\psi}\in \cB_2^4\}$ should be mutually orthogonal. We can show that $E\propto \bbI$ from Table~\ref{Table:2222}. This completes this proof.
\end{proof}
\vspace{0.4cm}
		\begin{table*}[t]
	\renewcommand\arraystretch{1.7}	
	\caption{Off-diagonal elements and Diagonal elements of $E=(a_{ijk,\ell mn})_{i,j,k,\ell,m,n\in\bbZ_2}$. We apply Lemma~\ref{lemma:zero} to two sets among $\cS_{(0,0,0,0)},\cS_{(0,0,0,1)},\cS_{(0,0,1,1)},\cS_{(0,1,0,1)},\cS_{(0,1,1,1)}$,  and apply Lemma~\ref{lemma:trivial} to one set  among  $\cS_{(0,0,0,0)},\cS_{(0,0,0,1)},\cS_{(0,0,1,1)},\cS_{(0,1,0,1)},\cS_{(0,1,1,1)}$.    }\label{Table:2222}
	\centering
	\renewcommand\tabcolsep{5pt}
	\begin{tabular}{ll|ll}
		\midrule[1.1pt]
		Sets &Elements 	&Sets &Elements \\
		\midrule[1.1pt]
		$\cS_{(0,0,0,0)}$, $\cS_{(0,0,0,1)}$	  &$a_{000,001}=a_{000,010}=a_{000,100}=a_{000,111}=0$ 		&$\cS_{(0,0,0,0)}$, $\cS_{(0,1,1,1)}$   &$a_{110,111}=a_{101,111}=a_{011,111}=0$	\\
		
		$\cS_{(0,0,0,0)}$, $\cS_{(0,0,1,1)}$   &$a_{000,011}=a_{000,110}=a_{100,111}=a_{001,111}=0$   	
		&	$\cS_{(0,0,0,0)}$, $\cS_{(0,1,0,1)}$   &$a_{000,101}=a_{010,111}=0$    \\
		
	$\cS_{(0,0,1,1)}$, $\cS_{(0,1,0,1)}$   &$a_{011,101}=a_{101,110}=a_{010,100}=a_{001,010}=0$&$\cS_{(0,0,0,1)}$, $\cS_{(0,1,0,1)}$   &$a_{001,101}=a_{010,101}=a_{100,101}=0$   \\
		
		$\cS_{(0,0,0,1)}$, $\cS_{(0,0,1,1)}$   &$a_{001,011}=a_{001,110}=a_{010,011}$ 	&$	\cS_{(0,0,1,1)}$   &$a_{001,100}=0$\\

 	 &$=a_{010,110}=a_{011,100}=a_{100,110}=0$ &$\cS_{(0,0,1,1)}$   &$a_{011,110}=0$\\
		\midrule[1.1pt]
		$\cS_{(0,0,0,0)}$   &$a_{000,000}=a_{111,111}$  &	$\cS_{(0,0,1,1)}$   &$a_{100,100}=a_{011,011}=a_{110,110}$ \\
		
		$\cS_{(0,0,0,1)}$   &$a_{000,000}=a_{001,001}=a_{010,010}=a_{100,100}$		&$\cS_{(0,1,0,1)}$ &$a_{010,010}=a_{101,101}$\\
		\midrule[1.1pt]
	\end{tabular}
\end{table*}

	In fact, we can show a general result.

	\begin{theorem}\label{thm:OES}The set $\cB_d^N$ defined in Eq.~\eqref{eq:dN}
		is a strongly nonlocal OES of size $d^N-(d-1)^N+1$
		in $(\bbC^d)^{\otimes N}$ for all $d\geq 2$ and $N\geq 3$.
	\end{theorem}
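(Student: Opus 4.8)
The set $\cB_d^N$ is already an OES of size $d^N-(d-1)^N+1$ by Lemma~\ref{lem:B_d}, so the plan is only to establish its strong nonlocality. First I would record that $\cB_d^N$ is invariant under the cyclic shift $\sigma$: the zero-orbit pair $\ket{0}_{A_1}\cdots\ket{0}_{A_N}\pm\ket{1}_{A_1}\cdots\ket{1}_{A_N}$ is fixed, while on a nontrivial orbit a direct computation shows each $\ket{\psi_s}$ is an eigenvector of $\sigma$ (with eigenvalue $w_k^{-s}$), so each $\cS_x$ is stable up to phases that do not affect any orthogonality relation. Consequently the analysis faced by party $B_i=\{A_1A_2\cdots A_N\}\setminus\{A_i\}$ is the same for every $i$, and by Lemma~\ref{lem:cyc} it suffices to prove that any orthogonality-preserving local measurement performed by the single party $A_2A_3\cdots A_N$ is trivial.

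I would then fix one POVM element $E=M^\dagger M$ of such a measurement and write it as a $d^{N-1}\times d^{N-1}$ matrix $(a_{I,J})_{I,J\in\bbZ_d^{N-1}}$ in the product basis $\{\ket{i_2}_{A_2}\cdots\ket{i_N}_{A_N}\}$. Splitting each orbit state along the first subsystem as $\ket{\psi_s}=\sum_{a\in\bbZ_d}\ket{a}_{A_1}\otimes\ket{\phi_{s,a}}$, the orthogonality-preserving condition $\bra{\psi_s}\,\bbI_{A_1}\otimes E\,\ket{\psi'_{s'}}=0$ collapses through the $A_1$ inner products to $\sum_{a}\bra{\phi_{s,a}}E\ket{\phi'_{s',a}}=0$. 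Hence only basis kets carrying a common $A_1$-label interact, and each relation is a sum of entries $a_{I,J}$ weighted by products of the complex-Hadamard coefficients from Eq.~\eqref{eq:B}. The goal is to force $E\propto\bbI$.

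The two appendix lemmas supply the two halves of this deduction. For two distinct orbits $\cS_x,\cS_y$ the relations hold for all index pairs $(s,t)$, so Lemma~\ref{lemma:zero} can invoke the linear independence of the characters $(s,t)\mapsto w_{k_1}^{-sj}w_{k_2}^{tl}$ to conclude that every cross entry $a_{I,J}$ with $I,J$ the suffixes of orbit elements sharing an $A_1$-label must vanish; these are genuinely off-diagonal because distinct orbits cannot share a full tuple. Within a single orbit the relations hold only for $s\neq s'$, so Lemma~\ref{lemma:trivial} instead runs a Fourier analysis in $s'-s$: the zero-shift part yields $\sum_j w_k^{uj}a_{\hat x^{(j)},\hat x^{(j)}}=0$ for all $u\neq0$, which equates all diagonal entries indexed by the suffixes of that orbit, while the nonzero-shift parts clear the remaining within-orbit off-diagonals once the needed cross-orbit zeros are already in hand. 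My plan is therefore to sweep all orbit pairs with Lemma~\ref{lemma:zero}, then sweep the individual orbits with Lemma~\ref{lemma:trivial}, exactly as the examples and Table~\ref{Table:2222} do for $d=2$.

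The crux, and the step I expect to be the main obstacle, is to show that this sweep is complete for all $d$ and $N$ rather than merely in the checked small cases. For an arbitrary off-diagonal index $I\neq J$, prepending the label $0$ gives $(0,I),(0,J)\in\bbX_d^N$; if these lie in different orbits then Lemma~\ref{lemma:zero} annihilates $a_{I,J}$ at once, and if they lie in the same orbit then $a_{I,J}$ is a within-orbit off-diagonal handled by Lemma~\ref{lemma:trivial}, but one must order the applications so that whenever several same-$A_1$ pairs share a common suffix-shift the previously established zeros shrink the residual Vandermonde system to a single unknown. For the diagonal entries, Lemma~\ref{lemma:trivial} equalizes all suffix-diagonals inside each orbit, and any two orbits that share a suffix thereby have their common diagonal value linked; proving that all $a_{I,I}$ coincide then amounts to showing that the graph on $\bbZ_d^{N-1}$ joining suffixes that occur in a common orbit is connected, which I would do by routing every suffix toward the all-zero suffix through orbits of tuples $(0,i_2,\dots,i_N)$ and using the zero orbit to bridge $(0,\dots,0)$ with $(1,\dots,1)$. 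Once every off-diagonal vanishes and every diagonal entry is equal, $E\propto\bbI$, so the measurement is trivial and Theorem~\ref{thm:OES} follows from Lemma~\ref{lem:cyc}.
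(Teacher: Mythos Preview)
Your overall architecture matches the paper's: reduce by cyclic symmetry to the party $A_2\cdots A_N$, expand an arbitrary POVM element $E$ in the product basis, kill cross-orbit off-diagonals with Lemma~\ref{lemma:zero}, handle within-orbit entries with Lemma~\ref{lemma:trivial}, and link the diagonals by chaining through orbits down to the all-zero suffix. The symmetry argument and the diagonal-connectivity idea are both sound and are exactly what the paper does.

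The gap is precisely the step you yourself flag as ``the main obstacle'': you never specify the ordering that makes the within-orbit applications of Lemma~\ref{lemma:trivial} go through. That lemma has a hypothesis---some suffix $r_t$ in the orbit must already satisfy $a_{r_t,r_j}=0$ for every other $j$---and your cross-orbit sweep with Lemma~\ref{lemma:zero} does not by itself guarantee this, because two distinct suffixes $I,J$ with $(0,I),(0,J)$ in the \emph{same} orbit need not have been touched yet. The paper's resolution is a weight stratification: write $\cA_k\subset\bbZ_d^{N-1}$ for the suffixes with exactly $k$ nonzero entries. If $I\in\cA_k$ and $J\in\cA_\ell$ with $k\neq\ell$ then $(0,I)$ and $(0,J)$ always lie in different orbits, so Lemma~\ref{lemma:zero} annihilates every cross-weight off-diagonal at once. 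One then inducts on $k$. Any orbit $\cO_{(0,I)}$ with $I\in\cA_k$ contains an element whose suffix lies in $\cA_{k-1}$ (cyclically shift a nonzero coordinate of $(0,I)$ into the $A_1$ slot), and every other suffix in that orbit lies in $\cA_{k-1}\cup\cA_k$; hence the cross-weight zeros together with the inductive hypothesis on $\cA_{k-1}$ supply the precondition of Lemma~\ref{lemma:trivial}, which then clears all within-$\cA_k$ off-diagonals for that orbit. The same downward chain $\cA_k\to\cA_{k-1}\to\cdots\to\cA_0$ is also what makes your diagonal connectivity concrete. Without this weight induction your ``ordering'' remains a promise rather than a proof; supplying it is the only missing ingredient.
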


The proof of Theorem~\ref{thm:OES} is given in Appendix~\ref{appendix:thm}. Note that Theorem~\ref{thm:OES} presents the first class of strongly nonlocal OESs in $(\bbC^d)^{\otimes N}$ for any $d\geq 2$ and $N\geq 3$.

\section{The strong nonlocality for small OGESs in $3,4$-partite systems}\label{sec:strong_OGES}
	
	In this section, we consider strongly nonlocal OGESs in $3,4$-partite systems which have smaller sizes than those of previously known.  Two states $\ket{\psi}$ and $\ket{\phi}$ are called LU-equivalent if there exists a product $U_1\otimes U_2\otimes \cdots\otimes U_N$ of unitary operators, such that
	\begin{equation}
	\ket{\psi}=U_1\otimes U_2\otimes \cdots\otimes U_N\ket{\phi}.
	\end{equation}
	It is known that LU-equivalence does not change the entanglement of a state.
	
		First, we consider $\cB_d^3$ in $\bbC^d\otimes \bbC^d\otimes \bbC^d$ for $d\geq 2$. Denote $[d-1]:=\{1,2,\ldots,d-1\}$. Then we can explicitly write down all states in $\cB_d^3$ as follows.
	Since
	\begin{equation}
	\begin{aligned}
	\bbX_d^3=&\cO_{(0,0,0)}\bigcup\left(\mathop{\cup}\limits_{i\in [d-1]}(\cO_{(0,0,i)}\cup\cO_{(0,i,i)})\right)\\&\bigcup \left(\mathop{\cup}\limits_{ i\neq  j\in [d-1]}\cO_{(0,i,j)}\right),
		\end{aligned}
	\end{equation}
	we obtain that
	\begin{equation}\label{eq:ddd}
	\begin{aligned}
	\cB_d^3=&\cS_{(0,0,0)}\bigcup\left(\mathop{\cup}\limits_{i\in [d-1]}(S_{(0,0,i)}\cup\cS_{(0,i,i)})\right)\\&\bigcup \left(\mathop{\cup}\limits_{ i\neq  j\in [d-1]}S_{(0,i,j)}\right),
		\end{aligned}
	\end{equation}
	where
	\begin{equation}
	\begin{aligned}
	\cS_{(0,0,0)}=\{&\ket{0}_{A_1}\ket{0}_{A_2}\ket{0}_{A_3}\pm\ket{1}_{A_1}\ket{1}_{A_2}\ket{1}_{A_3}\},\\
	\cS_{(0,0,i)}=\{&\ket{0}_{A_1}\ket{0}_{A_2}\ket{i}_{A_3}+w_3^s \ket{0}_{A_1}\ket{i}_{A_2}\ket{0}_{A_3}\\&+w_3^{2s} \ket{i}_{A_1}\ket{0}_{A_2}\ket{0}_{A_3}: s\in\bbZ_3\},\\
	\cS_{(0,i,i)}=\{&\ket{0}_{A_1}\ket{i}_{A_2}\ket{i}_{A_3}+w_3^s \ket{i}_{A_1}\ket{i}_{A_2}\ket{0}_{A_3}\\&+w_3^{2s} \ket{i}_{A_1}\ket{0}_{A_2}\ket{i}_{A_3}: s\in\bbZ_3\},\\
	\cS_{(0,i,j)}=\{&\ket{0}_{A_1}\ket{i}_{A_2}\ket{j}_{A_3}+w_3^s  \ket{i}_{A_1}\ket{j}_{A_2}\ket{0}_{A_3}\\&+w_3^{2s} \ket{j}_{A_1}\ket{0}_{A_2}\ket{i}_{A_3}: s\in\bbZ_3\},\\
	\end{aligned}
	\end{equation}
	for $ i\neq j\in[d-1]$.

	For $i\in[d-1]$ and $s\in\bbZ_3$, let
		\begin{equation*}
		\begin{aligned}
		\ket{\psi_s}=&\ket{0}_{A_1}\ket{0}_{A_2}\ket{i}_{A_3}+w_3^s\ket{0}_{A_1}\ket{i}_{A_2}\ket{0}_{A_3}\\&+     w_3^{2s}\ket{i}_{A_1}\ket{0}_{A_2}\ket{0}_{A_3}\in\cS(0,0,i).
		\end{aligned}
		\end{equation*}
		Then there must exist three unitary operators
		\begin{equation}
		  \begin{aligned}
		 P_1: \ket{0}_{A_1}\rightarrow \ket{0}_{A_1},  \ &\ket{i}_{A_1}\rightarrow w_3^{-2s}\ket{1}_{A_1};\\
		P_2: \ket{0}_{A_2}\rightarrow \ket{0}_{A_2}, \ &\ket{i}_{A_2}\rightarrow w_3^{-s}\ket{1}_{A_2};\\
		P_3: \ket{0}_{A_3}\rightarrow \ket{0}_{A_3}, \ &\ket{i}_{A_3}\rightarrow   \ket{1}_{A_3},
				  \end{aligned}
			\end{equation}
		such that
		\begin{equation*}
		\ket{W}_2^3=P_1\otimes P_2\otimes P_3	\ket{\psi_s}.
		\end{equation*}
		Thus $\ket{\psi_s}$ is LU-equivalent to a W-state. In the same way, we can show that any state in $\cS(0,0,0)$ and $\cS(0,i,j)$  is LU-equivalent to a $\ghz$-state, and any state in $\cS(0,i,i)$ is LU-equivalent to a W-state. Thus, we have the following result.	
	
  		\begin{table*}[t]
	\renewcommand\arraystretch{1.7}	
	\caption{Comparisons of the sizes between strongly nonlocal OPSs and strongly nonlocal OGESs.
 }\label{Table:compasions}
	\centering
	\renewcommand\tabcolsep{5pt}
	\begin{tabular}{l|ll|ll}
		\midrule[1.1pt]
		Systems &Sizes of the OPSs 	&References &Sizes of the OGESs 	&References \\
		\midrule[1.1pt]
	   $\bbC^3\otimes\bbC^3\otimes\bbC^3$ &$19$ &\cite{shi2021} &$18$ & Appendix~\ref{appendix:OGES_333}\\
	   	\hline
	   $\bbC^d\otimes\bbC^d\otimes\bbC^d$, $d\geq 3$ &$6(d-1)^2$ &\cite{yuan2020strong} &$d^3-(d-1)^3+1$ & Lemma~\ref{pro:ddd_OGES}\\
	   	\hline
	    $\bbC^d\otimes\bbC^d\otimes\bbC^d\otimes \bbC^d$, $d\geq 3$ &$d^{4}-(d-2)^4$ &\cite{shi2021hyper} &$d^4-(d-1)^4+1$ &Lemma~\ref{pro:OGES_dddd}\\
		\midrule[1.1pt]
	\end{tabular}
\end{table*}		
	
	\begin{lemma}\label{pro:ddd_OGES}
		For all $d\geq 2$, the set $\cB_d^3$ given by Eq.~\eqref{eq:ddd}
		is an OGES in $\bbC^d\otimes \bbC^d\otimes \bbC^d$.
	\end{lemma}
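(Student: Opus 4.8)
The plan is to reduce everything to the LU-equivalence computations already carried out immediately before the statement, combined with two standard facts: that a product of local unitaries preserves the Schmidt rank across every fixed bipartition, and that the three-qubit $\ghz$- and $W$-states are genuinely entangled. First I would observe that orthogonality requires no new work, since Lemma~\ref{lem:B_d} already guarantees that $\cB_d^3$ is an OES; it therefore remains only to show that each individual state of $\cB_d^3$ is genuinely entangled, i.e.\ entangled across every bipartition of $\{A_1,A_2,A_3\}$.

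Next I would use the decomposition in Eq.~\eqref{eq:ddd}: every state of $\cB_d^3$ lies in exactly one of the four families $\cS_{(0,0,0)}$, $\cS_{(0,0,i)}$, $\cS_{(0,i,i)}$, or $\cS_{(0,i,j)}$ with $i\neq j\in[d-1]$. The explicit local unitaries $P_1\otimes P_2\otimes P_3$ displayed above carry any state $\ket{\psi_s}\in\cS_{(0,0,i)}$ to the three-qubit $W$-state $\ket{W}_2^3$; the same recipe, using diagonal phase unitaries supported on the relevant two-dimensional coordinate subspaces and the identity elsewhere, maps every state of $\cS_{(0,i,i)}$ to $\ket{W}_2^3$ and every state of $\cS_{(0,0,0)}$ and $\cS_{(0,i,j)}$ to the three-qubit $\ghz$-state $\ket{\ghz}_2^3$. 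Hence each state in $\cB_d^3$ is LU-equivalent to either $\ket{\ghz}_2^3$ or $\ket{W}_2^3$.

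To finish, I would recall that for each of the three bipartitions $A_1|A_2A_3$, $A_2|A_1A_3$, and $A_3|A_1A_2$, both $\ket{\ghz}_2^3$ and $\ket{W}_2^3$ have Schmidt rank $2$; by the definitions in Sec.~\ref{sec:pre} this is exactly genuine entanglement. Since a product of local unitaries acts as an invertible linear map on each local factor, it cannot change the rank of the coefficient matrix across any fixed cut, so LU-equivalence preserves genuine entanglement. Therefore every state of $\cB_d^3$ is genuinely entangled, and together with orthogonality this shows $\cB_d^3$ is an OGES.

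I expect no serious obstacle here, because the substantive content lives in the LU-equivalence claims that precede the statement. The only points deserving a little care are checking that the stated maps are honestly unitary (they are, being diagonal with unimodular entries on a two-dimensional subspace, extended by the identity) and that a single family of unitaries simultaneously normalizes every member of a given orbit family rather than just one representative $\ket{\psi_s}$; this follows because varying $s$ only rescales the same three basis phases, which can be absorbed into the phase parameters of $P_1,P_2,P_3$.
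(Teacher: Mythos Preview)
Your argument is correct and mirrors the paper's own reasoning: the paragraph immediately preceding the lemma already exhibits the LU-equivalences of each orbit family $\cS_{(0,0,0)},\cS_{(0,0,i)},\cS_{(0,i,i)},\cS_{(0,i,j)}$ to $\ket{\ghz}_2^3$ or $\ket{W}_2^3$, and the paper then invokes the stated fact that LU-equivalence preserves entanglement, which you spell out via Schmidt ranks. The only minor slip is your description of the $P_k$ as ``diagonal with unimodular entries \ldots\ extended by the identity'': when $i\neq 1$ the assignment $\ket{i}\mapsto w\ket{1}$ permutes computational basis vectors, so any unitary extension must also move $\ket{1}$ rather than fix it---but this is purely cosmetic and does not affect the argument.
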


 Combining Theorem~\ref{thm:OES} and Lemma~\ref{pro:ddd_OGES}, we have presented a strongly nonlocal OGES of size $3d^2-3d+2$ in  $\bbC^d\otimes \bbC^d\otimes \bbC^d$ for all $d\geq 2$. When $d\geq 3$, the size of the strongly nonlocal OGES in our construction is about half of the size $6(d-1)^2$ of the strongly nonlocal OPS in Ref. \cite{yuan2020strong}. However, when $d=3$, the authors in Ref.~\cite{shi2021} presented a strongly nonlocal UPB of size $19$, which is smaller than the size $20$ of OGES in Lemma~\ref{pro:ddd_OGES}. For completeness, we construct a strongly nonlocal OGES of size $18$ in $\bbC^3\otimes\bbC^3\otimes\bbC^3$, see Appendix~\ref{appendix:OGES_333}.
See also Table~\ref{Table:compasions} for the comparisons.

It is natural to ask whether Lemma~\ref{pro:ddd_OGES} can be extended to $\cB_d^N$ with $N\geq 4$.
Unfortunately, the answer is no for $\cB_d^N$ when $N=4$.  In fact, the following state
	\begin{equation}
	\begin{aligned} \ket{\psi_i}=&\ket{0}_{A_1}\ket{0}_{A_2}\ket{i}_{A_3}\ket{i}_{A_4}+\ket{0}_{A_1}\ket{i}_{A_2}\ket{i}_{A_3}\ket{0}_{A_4}\\&+\ket{i}_{A_1}\ket{i}_{A_2}\ket{0}_{A_3}\ket{0}_{A_4}\\&+\ket{i}_{A_1}\ket{0}_{A_2}\ket{0}_{A_3}\ket{i}_{A_4}\in \cS_{(0,0,i,i)}\subset\cB_d^4
	\end{aligned}
	\end{equation}
is  separable in $A_1A_3|A_2A_4$ bipartition for each $i\in[d-1]$, since it can be written as
	\begin{equation}
	\ket{\psi_i}=(\ket{0}\ket{i}+\ket{i}\ket{0})_{A_1A_3}(\ket{0}\ket{i}+\ket{i}\ket{0})_{A_2A_4}.
	\end{equation}
 Thus $\cB_d^4$ is not an OGES. However, we can modify the coefficient matrix $B$ in Eq.~\eqref{eq:B} to get genuinely entangled states.

 Let 	\begin{equation}
	   \widetilde{B}=(b_{i,j})_{i,j\in\bbZ_4}:=\begin{pmatrix}
	   1 &1 &1 &2\\
	   1 &-1 &2 &-1\\
	   5 &5 &-2 &-4\\
	   5 &-5 &-4 &2\\
	   \end{pmatrix},
	\end{equation} which is a row orthogonal matrix.	For each $i\in[d-1]$, define
	\begin{equation}\label{eq:s00ii}
	\begin{aligned} \overline{\cS_{(0,0,i,i)}}:=&\{b_{s,0}\ket{0}_{A_1}\ket{0}_{A_2}\ket{i}_{A_3}\ket{i}_{A_4}\\&+b_{s,1}\ket{0}_{A_1}\ket{i}_{A_2}\ket{i}_{A_3}\ket{0}_{A_4}\\&+b_{s,2}\ket{i}_{A_1}\ket{i}_{A_2}\ket{0}_{A_3}\ket{0}_{A_4}\\
	&+b_{s,3}\ket{i}_{A_1}\ket{0}_{A_2}\ket{0}_{A_3}\ket{i}_{A_4}: s\in\bbZ_4\}.
	\end{aligned}
	\end{equation}
Let $\overline{\cB_d^4}$ be the set obtained from $\cB_d^4$ by replacing $\cS(0,0,i,i)$ with $\overline{\cS(0,0,i,i)}$ for all $i\in[d-1]$. Similar to Lemma~\ref{pro:ddd_OGES}, we can show that any state in $\cS_{(0,0,0,0)}$ and $\cS_{(0,i,0,i)}$ is LU-equivalent to a $\ghz$ state for $i\in [d-1]$, and any state in $\cS_{(0,0,0,i)}$ is LU-equivalent to a W state for $i\in [d-1]$. It is not clear for other states in $\overline{\cB_d^4}$. However, we are still able to show the genuine entanglement for all other states in $\overline{\cB_d^4}$.

	\begin{lemma}\label{pro:OGES_dddd}
For all $d\geq 2$, the set $\overline{\cB_d^4}$ is a strongly nonlocal OGES of size $d^4-(d-1)^4+1$ in $\bbC^d\otimes \bbC^d\otimes \bbC^d\otimes \bbC^d$.
	\end{lemma}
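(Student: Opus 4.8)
The plan is to establish the three assertions in turn: the cardinality, then orthogonality together with strong nonlocality, and finally genuine entanglement, which is the substantial part. First I would dispose of the cardinality and the orthogonality. Passing from $\cB_d^4$ to $\overline{\cB_d^4}$ only replaces, for each $i\in[d-1]$, the four states of $\cS_{(0,0,i,i)}$ by the four states of $\overline{\cS_{(0,0,i,i)}}$ defined in Eq.~\eqref{eq:s00ii}; since $\widetilde{B}$ is $4\times 4$ the block size is unchanged, so $|\overline{\cB_d^4}|=|\cB_d^4|=d^4-(d-1)^4+1$ by Lemma~\ref{lem:B_d}. Every state of $\overline{\cS_{(0,0,i,i)}}$ is supported on the same four product vectors indexed by the orbit $\cO_{(0,0,i,i)}$ as $\cS_{(0,0,i,i)}$, and this orbit is disjoint from every other orbit; hence the new states are automatically orthogonal to all states coming from other blocks. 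Within the block, $\langle\phi_s|\phi_{s'}\rangle=(\widetilde{B}\widetilde{B}^{\dagger})_{s',s}=0$ for $s\neq s'$ because $\widetilde{B}$ is row orthogonal. Thus $\overline{\cB_d^4}$ is an OES of the claimed size.

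Next, for strong nonlocality I would argue that the proof of Theorem~\ref{thm:OES} transfers essentially verbatim. That proof reduces the problem, via Lemma~\ref{lem:cyc}, to showing that a joint OPLM on $A_2A_3A_4$ is trivial, and then applies Lemma~\ref{lemma:zero} and Lemma~\ref{lemma:trivial} block by block to the decomposition $\{\cS_x\}$. The only features of a block entering that argument are that its coefficient matrix is invertible, has orthogonal rows, and has all entries nonzero; the Fourier matrix $B$ of Eq.~\eqref{eq:B} has them, and one checks directly that $\widetilde{B}$ does as well (its rows are pairwise orthogonal and all its entries are nonzero, so it is invertible). Consequently replacing $B$ by $\widetilde{B}$ inside the blocks $\cS_{(0,0,i,i)}$ leaves every step intact, and $\overline{\cB_d^4}$ is strongly nonlocal.

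The heart of the proof is genuine entanglement, which I would verify block by block using the orbit decomposition of $\bbX_d^4$. The blocks $\cS_{(0,0,0,0)}$, $\cS_{(0,i,0,i)}$ and $\cS_{(0,0,0,i)}$ consist of states LU-equivalent to $\ghz$ or $W$ states, hence are genuinely entangled. For every other block each state is a sum $\sum_j c_j\ket{w_j}$ over the \emph{distinct} tuples $w_j$ of a single orbit, so across a fixed bipartition each $w_j$ occupies its own cell of the Schmidt coefficient matrix; the state is a product there exactly when that matrix has rank $1$, which forces the occupied cells to fill a combinatorial rectangle ($1\times 4$, $4\times 1$, or $2\times 2$) carrying a rank-one coefficient pattern. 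Inspecting the cyclic-shift patterns shows that a $1\times 4$ or $4\times 1$ rectangle never occurs (a nonconstant orbit cannot have all shifts agreeing on one part), and that a full $2\times 2$ rectangle occurs only for the block $\cS_{(0,0,i,i)}$ in the bipartition $A_1A_3|A_2A_4$: there the four words $(0,0,i,i),(0,i,i,0),(i,i,0,0),(i,0,0,i)$ collapse onto the row labels $(0,i),(i,0)$ and the column labels $(0,i),(i,0)$, so the Schmidt matrix of $\ket{\phi_s}$ is $\bigl(\begin{smallmatrix} b_{s,0}&b_{s,1}\\ b_{s,3}&b_{s,2}\end{smallmatrix}\bigr)$, whose determinant $b_{s,0}b_{s,2}-b_{s,1}b_{s,3}$ evaluates to $-1,1,10,-10$ for $s=0,1,2,3$. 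This is precisely why $\widetilde{B}$ was chosen, and it renders $\overline{\cS_{(0,0,i,i)}}$ entangled across this last bipartition. For all remaining blocks and bipartitions (the orbits $(0,0,a,b)$ and $(0,a,0,b)$ with $a\neq b$, and $(0,a,b,c)$ with possible coincidences among $a,b,c$; for $d=2$ only the equal-value subcases arise), the coincidences among labels degenerate the occupied pattern at worst to an uneven, non-rectangular shape, so no full $2\times 2$ rectangle appears and the Schmidt rank stays $\geq 2$.

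I expect this genuine-entanglement case analysis to be the main obstacle. One must run through all seven bipartitions for each orbit type, track exactly which cyclic words share a row or a column, and confirm two things at once: that the single genuinely dangerous configuration—the $2\times 2$ rectangle of $\cS_{(0,0,i,i)}$ under $A_1A_3|A_2A_4$—is repaired by the nonvanishing determinants of $\widetilde{B}$, and that replacing $B$ by $\widetilde{B}$ introduces no new rank-one configuration for the unchanged Fourier blocks.
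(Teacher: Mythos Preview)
Your proof is correct, and for the size, orthogonality, and strong nonlocality it matches the paper's proof essentially line for line. The one caveat is your sentence ``That proof reduces the problem, via Lemma~\ref{lem:cyc}, to showing that a joint OPLM on $A_2A_3A_4$ is trivial'': Lemma~\ref{lem:cyc} still asks for all four $B_i$; the reduction to a single $B_i$ in Theorem~\ref{thm:OES} is a cyclic-symmetry step, and $\overline{\cB_d^4}$ is not literally cyclically invariant because $\widetilde{B}$ is not circulant. This is harmless, since a cyclic shift of the subsystems only permutes the columns of each block's coefficient matrix, preserving exactly the invertibility, row-orthogonality, and nowhere-vanishing properties you isolated; but it is worth saying explicitly.

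Where you genuinely diverge from the paper is in the genuine-entanglement check. The paper avoids your seven-bipartition case analysis for $\cS_{(0,0,i,j)}$, $\cS_{(0,p,0,q)}$, and $\cS_{(0,k,\ell,m)}$ by proving a small Claim---if a (not necessarily invertible) product operator $P_1\otimes\cdots\otimes P_4$ sends $\ket{\psi}$ to a genuinely entangled state, then $\ket{\psi}$ was already genuinely entangled---and then exhibiting for each such state an explicit product operator collapsing it onto $\ket{W}_2^4$. Only for $\overline{\cS_{(0,0,i,i)}}$ does the paper compute Schmidt ranks directly across all bipartitions, exactly as you do. Your combinatorial ``rectangle'' argument is more elementary and uniform but costs more casework; the paper's product-operator trick is quicker once the Claim is in hand and sidesteps the orbit-by-orbit bipartition inspection you correctly identify as the main obstacle.
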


The proof of Lemma~\ref{pro:OGES_dddd} is given in Appendix~\ref{appendix:OGES_dddd}.	Lemma~\ref{pro:OGES_dddd} provides a strongly nonlocal OGES in four-partite systems whose size is smaller than that of the strongly nonlocal OPS in Ref.~\cite{shi2021hyper}. See also Table~\ref{Table:compasions} for a comparison.

Before closing this section, we remark that it is not easy to extend Lemma~\ref{pro:ddd_OGES} or Lemma~\ref{pro:OGES_dddd} to $N$-partite systems with $N\geq 5$. From the proof of Lemma~\ref{lem:B_d}, we know that for any state $\ket{\psi}\in\cB_d^N$, $\ket{\psi}$ is entangled in the bipartition $A_1|A_2A_3\cdots A_N$.
Since $\cB_d^N$ has a similar structure under the cyclic permutation of the subsystems $\{A_1,A_2,\ldots,A_N\}$,  $\ket{\psi}$ must be entangled in the bipartitions $\{A_1A_2\ldots A_N\}\setminus \{A_i\}$ for all $i\in [d]$. However, this is far from enough since we need to the entangled property for all bipartitions.

\begin{figure}[b]
	\centering
	\includegraphics[scale=0.6]{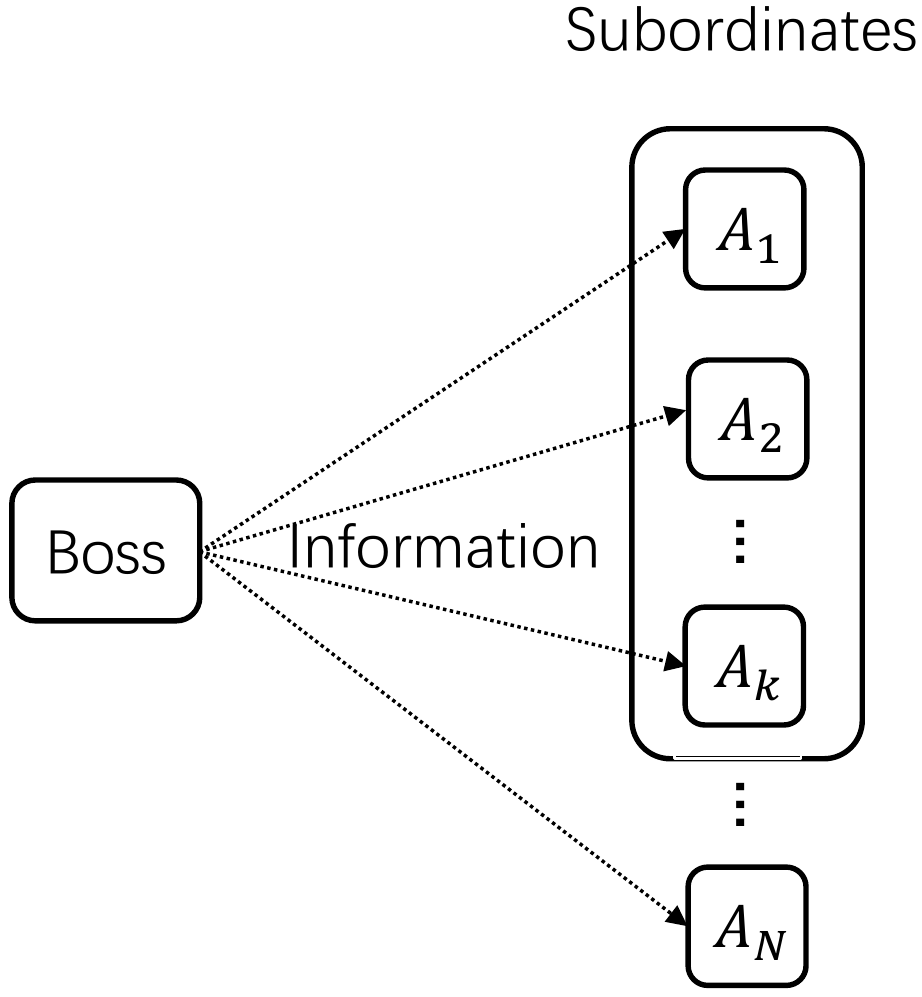}
	\caption{Local hiding of information. The information is encoded in a strongly nonlocal set of $N$-partite orthogonal states. The so called ``boss" sends the information to his  $N$ ``subordinates", $A_1, A_2, \ldots ,A_N$ through some quantum channels. If any $k$ ``subordinates" are collusive, $k<N$, and they can only perform OPLMs,   then the information will not be accessible at all.}\label{Fig:hiding}
\end{figure}

 \section{Local hiding of information} \label{sec:hiding}

  In this section, we indicate that strong quantum nonlocality can be used for local hiding of information \cite{bandyopadhyay2021activation,terhal2001hiding,divincenzo2002quantum,eggeling2002hiding,Matthews2009Distinguishability}. Assume that information is encoded in a locally indistinguishable set of $N$-partite orthogonal states,  and the so-called ``boss'' sends it to his $N$ ``subordinates''. These subordinates are from different labs, and they can only  communicate classically with each other and perform local measurements. Due to the local indistinguishability of the set, the information cannot be fully accessed by the subordinates, and part of it remains hidden. For example, since Bell basis is locally indistinguishable, one can encode 2 classical bits in four Bell states, but only 1 bit can be extracted locally \cite{terhal2001hiding,bandyopadhyay2021activation}. Note that if some subordinates are collusive, that is some subordinates are from the same lab and they can perform joint measurements,  then the information could be fully accessed by the $N$ subordinates. For example, when the information is encoded in a three-qubit UPB, then the information could be fully accessed by the three subordinates if any two of them are collusive \cite{bennett1999unextendible,1,cohen2008understanding}.  In order to avoid this collusion,  the boss can encode the information in a genuinely nonlocal set of $N$-partite orthogonal states. (Note that if a set of orthogonal states is locally indistinguishable for each bipartition of the subsystems, then this set is said to be genuinely nonlocal \cite{li2020local,Sumit2019Genuinely}). Thus the information cannot be fully accessed by the subordinates even if any $k$ subordinates are collusive, where $k<N$.

  In the above scheme of local hiding of information, we only use genuine quantum nonlocality, and it does not require strong quantum nonlocality. These subordinates may obtain part of the information. However, if the boss encode the information in a strongly nonlocal set of $N$-partite orthogonal states, and these subordinates can only perform OPLMs, then the information will not be accessible at all. By Theorem~\ref{thm:OES}, the new scheme of local hiding of information can be used for  $N$-partite systems, $N\geq 3$.

In the local hiding scheme by using strongly nonlocal sets in Fig.~\ref{Fig:hiding}, there are two ways to reveal the whole information by these subordinates. One way is to perform global measurements when all  $N$ subordinates are collusive, since any set of orthogonal pure states can be perfectly distinguished by using global measurements \cite{computation2010}.
 Another way is to use additional quantum resources such as entanglement \cite{cohen2008understanding}. It is known that if some proper $N$-partite genuinely entangled resource state assists all these $N$ subordinates, then the information could be fully accessed by these subordinates \cite{Sumit2019Genuinely,Halder2019Strong,2020Strong}.

Recently, $k$-uniform quantum information masking was proposed in  \cite{shi2021k}. When  information  is masked in a set of $N$-partite states $\{\ket{\psi_j}\}$, it is required that the reductions to $k$ parties of $\ket{\psi_j}$ are identical for all $j$. That is any $k$ parties have no information about the value $j$. There are several differences between $k$-uniform quantum information masking and the above local hiding of information. The former is based on reductions of states of a set, while the latter is based on local indistinguishability of a set.  In $k$-uniform quantum information, the masked information is completely inaccessible to each set of $k$ parties. However, in the local hiding scheme of information, part of information remains hidden if $k$ parties are collusive. In a sense, our scheme of local hiding of information is weaker than the scheme of $k$-uniform quantum information masking, which can also be seen from the range of parameters. In $k$-uniform quantum information masking, the parameters satisfy $k\leq\fl{N}{2}$ \cite{shi2021k} if the states from a whole space are masked. While in local hiding of information, $k$ has a wide range, that is $k\leq N-1$.

\section{Conclusion and discussion}\label{sec:con}
In this paper, we constructed a set of orthogonal entangled states in $(\bbC^d)^{\otimes N}$ for $d\geq 2$ and $N\geq 3$	from cyclic permutation group action, and showed strong quantum nonlocality in general $N$-partite systems for the first time. This result positively answered an open questions proposed in Refs.~\cite{2020Strong}. Further, when $N=3,4$ and $d\geq 3$, we presented a strongly nonlocal orthogonal genuinely entangled set, which has a smaller size than that of all known strongly nonlocal OPS in the same systems. Finally, we connected strong quantum nonlocality with local hiding of information.

There are also some open questions left. Whether we can show strong quantum nonlocality in $\bbC^{d_1}\otimes \bbC^{d_2}\otimes \cdots\otimes\bbC^{d_N}$ for any $d_i\geq 2$ and $1\leq i\leq N$?
How to construct a strongly nonlocal orthogonal genuinely entangled set in $(\bbC^d)^{\otimes N}$ for any $d\geq 2$ and $N\geq 5$? Whether we can use other permutation group actions to construct a strongly nonlocal orthogonal entangled set? Can we construct a strongly nonlocal UPB  for $N$-partite systems with $N\geq 5$? What is the minimum size of a strongly nonlocal OPS in $(\bbC^d)^{\otimes N}$?

\section*{Acknowledgments}
\label{sec:ack}		
We thank Mao-Sheng Li and Jun Gao for discussing this problem. F.S., Z.Y. and X.Z. were supported by the NSFC under Grants No. 11771419 and No. 12171452, the Anhui Initiative in Quantum Information Technologies under Grant No. AHY150200, and the National Key Research and Development Program of China 2020YFA0713100. L.C. was supported by the  NNSF of China (Grant No. 11871089), and the Fundamental Research Funds for the Central Universities (Grant No. ZG216S2005).

\vspace{5cm}
\onecolumngrid
\appendix

\section{The proof of Lemma~\ref{lem:B_d}}\label{appendix:lem:B_d}
	\begin{proof}	
	We have shown that 	$|\cB_d^N|=d^N-(d-1)^N+1$. Next, we show that $\cB_d^N$  is an OES.  Obviously,  $\cS_{(0,0,\ldots,0)}$ is an OES. We only need to consider $\cS_x$  for $x\neq (0,0\ldots,0)\in \bbX_d^N$. Assume  $\cO_{x}=\{(i_1^{(j)},i_2^{(j)},\ldots,i_N^{(j)})\mid j\in\bbZ_k\}$, $k\geq 2$. Since any two elements of $\cO_{x}=\{(i_1^{(j)},i_2^{(j)},\ldots,i_N^{(j)})\mid j\in\bbZ_k\}$ are distinct,  the states in $\{\ket{i_1^{(j)}}_{A_1}\ket{i_2^{(j)}}_{A_2}\cdots\ket{i_N^{(j)}}_{A_N}\mid j\in\bbZ_k\}$ are mutually orthogonal. Since the coefficient matrix
	$B=(w_k^{sj})_{s,j\in\bbZ_k}$ is a complex Hadamard matrix of order $k$, the states in $S_x$ must be mutually orthogonal. Next, we show that any state in $\cS_x$ is an entangled state.
	
	Without loss of generality, we assume that
	$x=(i_1^{(0)},i_2^{(0)},\ldots,i_N^{(0)})$.  By definition, $\cO_x=\{\sigma^rx\mid r\in\bbZ_N\}$. We have the following two claims.
	\begin{enumerate}[(i)]
		\item  We claim that the $(N-1)$-tuples $\{(i_2^{(j)},\ldots,i_N^{(j)})\mid j\in\bbZ_k\}$ must be mutually distinct. If  $(i_2^{(j)},\ldots,i_N^{(j)})=(i_2^{(j')},\ldots,i_N^{(j')})$ for some $j\neq j'\in\bbZ_k$, then $i_1^{(j)}=i_1^{(j')}$. Otherwise, if $i_1^{(j)}\neq i_1^{(j')}$, then   $(i_1^{(j)},i_2^{(j)},\ldots,i_N^{(j)})\notin \cO_x$ or $(i_1^{(j')},i_2^{(j')},\ldots,i_N^{(j')})\notin\cO_x$. We obtain that $(i_1^{(j)},i_2^{(j)},\ldots,i_N^{(j)})=(i_1^{(j')},i_2^{(j')},\ldots,i_N^{(j')})$, which is impossible. We obtain this claim.
		
		\item  We also claim that there must exist $j\neq j'\in\bbZ_k$, such that $i_1^{(j)}=0$ and $i_1^{(j')}\neq 0$. Since $x\neq (0,0\ldots,0)\in \bbX_d^N$, there must exist  $1\leq t\neq t'\leq N$, such that  $i_t^{(0)}=0$ and $i_{t'}^{(0)}\neq 0$. Further, there must exist $j\neq j'\in\bbZ_k$, such that $(i_1^{(j)},i_2^{(j)},\ldots,i_N^{(j)})=\sigma^{(t-1)}x$, and $(i_1^{(j')},i_2^{(j')},\ldots,i_N^{(j')})=\sigma^{(t'-1)}x$. Then $i_1^{(j)}=i_t^{(0)}=0$ and $i_1^{(j')}=i_{t'}^{(0)}\neq0$. We obtain this claim.
	\end{enumerate}	
	
	For $s\in\bbZ_k$, let
	\begin{equation*}
	\ket{\psi_s}=\sum_{j\in\bbZ_k}w_k^{sj}\ket{i_1^{(j)}}_{A_1}\ket{i_2^{(j)}}_{A_2} \cdots\ket{i_N^{(j)}}_{A_N}\in S_x.
	\end{equation*}
	By the above two claims, we know that $\rank(\ket{\psi_s}_{A_1|A_2\ldots A_N})\geq 2$. Thus, the set $\cS_x$ is an OES. Further, if $\cO_x\cap \cO_y=\emptyset$, then any state in $\cS_x$ is orthogonal to any state in $\cS_y$. Thus, $\cB_d^N$ is an OES.
\end{proof}
\vspace{0.4cm}

\section{Two lemmas for showing the strong nonlocality}\label{appendix:two_lemmas}
\begin{lemma}\label{lemma:zero}
	Assume that $\cB_i:=\{\ket{0},\ket{1},\ldots,\ket{d_i-1}\}$  is the computational basis of $\bbC^{d_i}$ for $i=1,2$. Let  a  $d_2\times d_2$ Hermitian matrix $E=(a_{i,j})_{i,j\in\bbZ_{d_2}}$ be the matrix representation of a Hermitian operator  $E$  under the basis  $\cB_2$.  Suppose that we have  two orthogonal sets in $\bbC^{d_1}\otimes \bbC^{d_2}$,
	\begin{equation}
		\begin{aligned}
			\{\ket{\psi_{i}}&=\sum_{j\in\bbZ_m}b_{i,j}\ket{p_j}\ket{r_j}\}_{i\in\bbZ_m},
			\\
			\{\ket{\varphi_{k}}&=\sum_{\ell\in\bbZ_n}c_{k,\ell}\ket{q_\ell}\ket{s_\ell}\}_{k\in\bbZ_n},
		\end{aligned}
	\end{equation}
	where  $\{\ket{r_j}\}_{j\in\bbZ_m}$, $\{\ket{s_\ell}\}_{\ell\in\bbZ_n}$ are two  nonempty  subsets of $\cB_2$, and $\ket{p_j}$, $\ket{q_\ell}\in \cB_1$ for $j\in \bbZ_m$ and $\ell\in\bbZ_n$. Here we do not require that these $p_j$'s ($q_l$'s, respectively) are distinct. Further, assume that
	\begin{equation}
		\bra{\psi_i}\bbI\otimes E\ket{\varphi_k}=0,  \ \text{for} \ i\in\bbZ_m, \ k\in\bbZ_n.
	\end{equation}
	If $p_j=q_\ell$ for some $j\in \bbZ_m$ and $\ell\in \bbZ_n$, then
	\begin{equation}
		a_{r_j,s_\ell}=a_{s_\ell,r_j}=0.
	\end{equation}
\end{lemma}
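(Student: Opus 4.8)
The plan is to convert the whole family of orthogonality conditions into a single matrix identity and then invert the coefficient matrices. First I would expand the bilinear form. Writing $E\ket{s_{\ell'}}=\sum_{t\in\bbZ_{d_2}}a_{t,s_{\ell'}}\ket{t}$, and using that the first-register kets are orthonormal basis vectors so that $\braket{p_{j'}}{q_{\ell'}}=\delta_{p_{j'},q_{\ell'}}$ while $\bra{r_{j'}}E\ket{s_{\ell'}}=a_{r_{j'},s_{\ell'}}$, each hypothesis $\bra{\psi_i}\bbI\otimes E\ket{\varphi_k}=0$ collapses to
\[
\sum_{j'\in\bbZ_m}\sum_{\ell'\in\bbZ_n}\overline{b_{i,j'}}\,c_{k,\ell'}\,\delta_{p_{j'},q_{\ell'}}\,a_{r_{j'},s_{\ell'}}=0,\qquad i\in\bbZ_m,\ k\in\bbZ_n.
\]

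Next I would package the unknowns into an $m\times n$ matrix $T=(T_{j',\ell'})$ defined by $T_{j',\ell'}=\delta_{p_{j'},q_{\ell'}}\,a_{r_{j'},s_{\ell'}}$, that is, $T_{j',\ell'}=a_{r_{j'},s_{\ell'}}$ exactly when $p_{j'}=q_{\ell'}$ and $0$ otherwise. Setting $B=(b_{i,j})_{i,j\in\bbZ_m}$ and $C=(c_{k,\ell})_{k,\ell\in\bbZ_n}$, the displayed equations say precisely that $\overline{B}\,T\,C^{\top}=0$, with both coefficient matrices square by the indexing.

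The crux is that $B$ and $C$ are invertible. Since $\{\ket{r_j}\}_{j\in\bbZ_m}$ is a subset of $\cB_2$, its elements are distinct, so the product kets $\{\ket{p_j}\ket{r_j}\}_{j\in\bbZ_m}$ are mutually orthonormal even when the $p_j$ repeat; hence $\braket{\psi_i}{\psi_{i'}}=\sum_{j}\overline{b_{i,j}}\,b_{i',j}$. Because $\{\ket{\psi_i}\}$ is an orthogonal set of nonzero states, the rows of $B$ are mutually orthogonal and nonzero, so the $m\times m$ matrix $B$ has full rank; the same argument gives invertibility of $C$. (In each application $B$ and $C$ are in fact the complex Hadamard matrix of Eq.~\eqref{eq:B} or the row-orthogonal matrix $\widetilde{B}$, so invertibility holds directly.) Multiplying $\overline{B}\,T\,C^{\top}=0$ on the left by $\overline{B}^{-1}$ and on the right by $(C^{\top})^{-1}$ then forces $T=0$. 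In particular, whenever $p_j=q_\ell$ we obtain $a_{r_j,s_\ell}=T_{j,\ell}=0$, and Hermiticity of $E$ yields $a_{s_\ell,r_j}=\overline{a_{r_j,s_\ell}}=0$, which is the assertion.

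The main obstacle is exactly this invertibility step: the conclusion is false without some nondegeneracy of $B$ and $C$ (for instance, vanishing coefficient rows would satisfy every orthogonality relation vacuously), so the proof must extract invertibility from the hypothesis that each family is an orthogonal set. The only other point demanding care is the bookkeeping, namely keeping the summation indices $j',\ell'$ distinct from the target indices $j,\ell$ and correctly collapsing the first-register overlaps through the Kronecker deltas, so that the conditions assemble into the single clean product $\overline{B}\,T\,C^{\top}$.
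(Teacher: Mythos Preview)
Your argument is correct and is essentially the paper's own proof: the paper expands $\bra{\psi_i}\bbI\otimes E\ket{\varphi_k}$ in the same way, packages the unknowns $\braket{p_j}{q_\ell}a_{r_j,s_\ell}$ into a vector $\bX$, and uses invertibility of $\overline{B}\otimes C$ (deduced from $B,C$ being full rank because the two families are orthogonal) to conclude $\bX=\mathbf{0}$. Your matrix identity $\overline{B}\,T\,C^{\top}=0$ is just the unvectorized form of the paper's $(\overline{B}\otimes C)\bX=\mathbf{0}$, so the two proofs coincide up to this cosmetic repackaging.
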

\begin{proof}
	Let $B=(b_{i,j})_{i,j\in\bbZ_m}$ and  $C=(c_{k,\ell})_{k,\ell\in\bbZ_n}$. Since $\{\ket{\psi_{i}}\}_{i\in\bbZ_m}$ and    $\{\ket{\varphi_{k}}\}_{k\in\bbZ_n}$ are two orthogonal sets, we obtain that $B$ and $C$ are both full-rank.  Denote $\overline{B}=(\overline{b_{i,j}})_{i,j\in\bbZ_m}$, where $\overline{b_{i,j}}$ is the complex conjugate of $b_{i,j}$.  Then  $\overline{B}\otimes C$ is also full-rank. Next,
	\begin{equation}\label{eq:orthogonal}
	\begin{aligned}
	\bra{\psi_i}\bbI\otimes E\ket{\varphi_k}&=\sum_{j\in\bbZ_m}\sum_{\ell\in\bbZ_n}\overline{b_{i,j}}c_{k,\ell}\braket{p_j}{q_\ell}\bra{r_j}E\ket{s_\ell}
	=\sum_{j\in\bbZ_m}\sum_{\ell\in\bbZ_n}\overline{b_{i,j}}c_{k,\ell}\braket{p_j}{q_\ell}a_{r_j,s_\ell}
	=\bu_{i,k}\cdot\bX=0,
	\end{aligned}
	\end{equation}
	where
	\begin{equation*}
	\begin{aligned}
	\bu_{i,k}&=(\overline{b_{i,0}}c_{k,0}, \ \overline{b_{i,0}}c_{k,1}, \ \cdots, \ \overline{b_{i,0}}c_{k,n-1}, \ \cdots, \ \overline{b_{i,m-1}}c_{k,0}, \ \overline{b_{i,m-1}}c_{k,1}, \ \cdots, \ \overline{b_{i,m-1}}c_{k,n-1})\\
	&=(\overline{b_{i,0}}, \ \overline{b_{i,1}}, \ \cdots, \ \overline{b_{i,m-1}})\otimes (c_{k,0}, \ c_{k,1}, \ \cdots, \ c_{k,n-1})
	\end{aligned}
	\end{equation*}
	and
	\begin{equation*}
	\begin{aligned}
	\bX=(&\braket{p_0}{q_0}a_{r_0,s_0}, \ \braket{p_0}{q_1}a_{r_0,s_1}, \ \cdots, \ \braket{p_0}{q_{n-1}}a_{r_0,s_{n-1}}, \ \cdots, \ \\
	&\braket{p_{m-1}}{q_0}a_{r_{m-1},s_0}, \ \braket{p_{m-1}}{q_1}a_{r_{m-1},s_1}, \ \cdots, \ \braket{p_{m-1}}{q_{n-1}}a_{r_{m-1},s_{n-1}})^{T}.
	\end{aligned}
	\end{equation*}
	Since Eq.~\eqref{eq:orthogonal} holds for any $i\in \bbZ_m$, and $k\in \bbZ_n$. We obtain that
	\begin{equation*}
	\overline{B}\otimes C\cdot \bX=\textbf{0}.
	\end{equation*}
	Since $\overline{B}\otimes C$ is full-rank, we have $\bX=\textbf{0}$.  That is
	\begin{equation*}
	\braket{p_j}{q_\ell}a_{r_j,s_\ell}=0, \ \text{for} \ j\in\bbZ_m, \ \ell\in\bbZ_n.
	\end{equation*}
	If $p_j=q_\ell$ for some $j\in \bbZ_m$ and $\ell\in \bbZ_n$, then we obtain that
	\begin{equation*}
	a_{r_j,s_\ell}=0.
	\end{equation*}
	Since $E=E^{\dagger}$, we also have $a_{s_\ell,r_j}=0$.	
\end{proof}	
\vspace{0.4cm}

\begin{lemma}\label{lemma:trivial}
	Assume that $\cB_i:=\{\ket{0},\ket{1},\ldots,\ket{d_i-1}\}$  is the computational basis of $\bbC^{d_i}$ for $i=1,2$. Let  a  $d_2\times d_2$ Hermitian matrix $E=(a_{i,j})_{i,j\in\bbZ_{d_2}}$ be the matrix representation of a Hermitian operator  $E$  under the basis  $\cB_2$.  Suppose that we have an orthogonal set in $\bbC^{d_1}\otimes \bbC^{d_2}$,
	\begin{equation}
		\begin{aligned}
			\{\ket{\psi_{i}}&=\sum_{j\in\bbZ_m}b_{i,j}\ket{p_j}\ket{r_j}\}_{i\in\bbZ_m},
		\end{aligned}
	\end{equation}
	where $\{\ket{r_j}\}_{j\in\bbZ_m}$ is a  subset  of $\cB_2$, and $\ket{p_j}\in \cB_1$ for $j\in \bbZ_m$.  Here we do not require that these $p_j$'s are distinct. Further, assume that
	\begin{equation}
		\bra{\psi_i}\bbI\otimes E\ket{\psi_k}=0,  \ \text{for} \ i\neq k\in\bbZ_m.
	\end{equation}
	If there exists  $t\in \bbZ_m$, such that
	$a_{r_t,r_j}=0$ for $j\neq t\in\bbZ_m$, and $b_{i,t}\neq 0$ for $i\in\bbZ_m$, then we obtain that
	\begin{equation}
		a_{r_i,r_j}=a_{r_j,r_i}=0,  \ \text{for} \ i\neq j\ \text{and} \ p_i=p_j,
	\end{equation}
	and
	\begin{equation}
		a_{r_0,r_0}=a_{r_i,r_i},   \ \text{for} \ i\neq 0\in\bbZ_m.
	\end{equation}
\end{lemma}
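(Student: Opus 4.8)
The plan is to recast both the orthogonality of the set and the triviality condition as statements that two $m\times m$ matrices built from $B=(b_{i,j})$ are diagonal, and then to use the anchoring index $t$ to force the relevant matrix to be a scalar. First I would introduce the Hermitian matrix $\widetilde{A}=(\widetilde{a}_{j,\ell})_{j,\ell\in\bbZ_m}$ with $\widetilde{a}_{j,\ell}:=\braket{p_j}{p_\ell}\,a_{r_j,r_\ell}$; this records exactly which entries of $E$ survive the overlap $\braket{p_j}{p_\ell}$. Because $\{\ket{r_j}\}_{j\in\bbZ_m}$ is a subset of $\cB_2$, the kets $\ket{r_j}$ are mutually orthogonal, so a direct expansion gives $\braket{\psi_i}{\psi_k}=\sum_{j}\overline{b_{i,j}}b_{k,j}=(\overline{B}B^{T})_{i,k}$ and $\bra{\psi_i}\bbI\otimes E\ket{\psi_k}=(\overline{B}\,\widetilde{A}\,B^{T})_{i,k}$. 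Orthogonality of the set then says $\overline{B}B^{T}=\Gamma$ is a positive diagonal matrix, so $B$ is invertible and $\Gamma^{-1}\overline{B}=(B^{T})^{-1}$, while the hypothesis $\bra{\psi_i}\bbI\otimes E\ket{\psi_k}=0$ for $i\neq k$ says $\overline{B}\,\widetilde{A}\,B^{T}=\Lambda$ is diagonal.

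Next I would solve for $\widetilde{A}$. Writing $\Theta:=\Gamma^{-1}\Lambda\Gamma^{-1}$, a diagonal matrix with entries $\theta_i$, the two identities above give $\widetilde{A}=B^{T}\Theta\,\overline{B}$. Now I bring in the anchor. The assumption $a_{r_t,r_j}=0$ for all $j\neq t$, together with $\widetilde{a}_{t,\ell}=\braket{p_t}{p_\ell}a_{r_t,r_\ell}$, forces the $t$-th row of $\widetilde{A}$ to equal $\mu\,e_t^{T}$ with $\mu:=a_{r_t,r_t}$ (each off-diagonal entry vanishes either because $p_t\neq p_\ell$ or because $a_{r_t,r_\ell}=0$). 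Letting $\beta=(b_{i,t})_i$ be the $t$-th column of $B$, the identity $e_t^{T}\widetilde{A}=\mu\,e_t^{T}$ becomes $\beta^{T}\Theta\,\overline{B}=\mu\,e_t^{T}$; multiplying on the right by $\overline{B}^{-1}$ and using $e_t^{T}\overline{B}^{-1}=e_t^{T}B^{T}\Gamma^{-1}=\beta^{T}\Gamma^{-1}$ yields $\beta^{T}(\Theta-\mu\Gamma^{-1})=0$, i.e.\ $b_{i,t}(\theta_i-\mu\Gamma_i^{-1})=0$ for every $i$.

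At this point the hypothesis $b_{i,t}\neq0$ does the decisive work: it lets me cancel $b_{i,t}$ and conclude $\theta_i=\mu\Gamma_i^{-1}$ for all $i$, that is $\Theta=\mu\Gamma^{-1}$. Substituting back, $\widetilde{A}=\mu\,B^{T}\Gamma^{-1}\overline{B}=\mu\,\bbI$, since $\Gamma^{-1}\overline{B}=(B^{T})^{-1}$ gives $B^{T}\Gamma^{-1}\overline{B}=B^{T}(B^{T})^{-1}=\bbI$. Reading off the entries of $\widetilde{A}=\mu\bbI$ then delivers both conclusions simultaneously: for $i\neq j$ with $p_i=p_j$ we get $a_{r_i,r_j}=\widetilde{a}_{i,j}=0$ (and $a_{r_j,r_i}=0$ by Hermiticity of $E$), and for every $i$ we get $a_{r_i,r_i}=\widetilde{a}_{i,i}=\mu=a_{r_0,r_0}$.

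I expect the main obstacle to be spotting that it is the orthogonality of the states themselves, not the measurement condition, that already pins $\overline{B}B^{T}$ to be diagonal; this is the ingredient that turns $B$ into an essentially unitary factor and lets the single anchoring row $t$ propagate to all rows. The transpose/conjugate bookkeeping around $\Gamma^{-1}\overline{B}=(B^{T})^{-1}$ is routine but must be handled carefully. The roles of the two hypotheses are then transparent: $a_{r_t,r_j}=0$ supplies the scalar $\mu$ through the clean row $t$, while $b_{i,t}\neq0$ permits the cancellation that spreads $\mu$ across the whole diagonal and kills every admissible off-diagonal entry. (In the special case where $B$ is a Fourier matrix, as in the examples, the same conclusion can be reached by summing over roots of unity, but the argument above covers the general full-rank $B$ allowed by the statement.)
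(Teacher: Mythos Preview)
Your proof is correct and follows essentially the same approach as the paper's: both exploit the row-orthogonality of $B$ (coming from the orthogonality of the $\ket{\psi_i}$) to invert the relation between $\widetilde{A}$ and the diagonal matrix of inner products, and then use the clean $t$-th row together with $b_{i,t}\neq 0$ to force $\widetilde{A}$ to be a scalar multiple of the identity. The only difference is packaging---the paper normalizes $B$ to a unitary $C$ and vectorizes via $\overline{C}\otimes C$ acting on $\mathbf{X}$, whereas you work directly with the $m\times m$ conjugation $\overline{B}\,\widetilde{A}\,B^{T}=\Lambda$; your version is slightly more streamlined but mathematically equivalent.
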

\begin{proof}
	Without loss of generality, we can assume that $t=0$. Let $B=(b_{i,j})_{i,j\in\bbZ_m}$,  $C=(c_{i,j})_{i,j\in\bbZ_m}$, where $c_{i,j}=b_{i,j}/\sqrt{\sum_{j\in\bbZ_m}|b_{i,j}|^2}$. First, we can normalize the states 	$\{\ket{\psi_{i}}=\sum_{j\in\bbZ_m}b_{i,j}\ket{p_j}\ket{r_j}\}_{i\in\bbZ_m}$ as $\{\ket{\varphi_{i}}=\sum_{j\in\bbZ_m}c_{i,j}\ket{p_j}\ket{r_j}\}_{i\in\bbZ_m}$. Since the states $\{\ket{\psi_{i}}\}_{i\in\bbZ_m}$ are mutually orthogonal, we obtain that the row vectors of $B$ are mutually orthogonal. This implies that $C$ is a unitary matrix. We can also obtain that
	\begin{equation*}
	\bra{\varphi_i}\bbI\otimes E\ket{\varphi_k}=0,  \ \text{for} \ i\neq k\in\bbZ_m.
	\end{equation*}
   By the same discussion as Eq.~\eqref{eq:orthogonal}, we have
	\begin{equation}\label{eq:orthogonal_1}
	\bra{\varphi_i}\bbI\otimes E\ket{\varphi_k}=\bu_{i,k}\cdot \textbf{X}=0,
	\end{equation}
	where
	\begin{equation*}
	\bu_{i,k}=(\overline{c_{i,0}}, \ \overline{c_{i,1}}, \ \cdots, \ \overline{c_{i,m-1}})\otimes (c_{k,0}, \ c_{k,1}, \ \cdots, \ c_{k,m-1}),
	\end{equation*}
	and
	\begin{equation*}
	\begin{aligned}
	\bX=(&\braket{p_0}{p_0}a_{r_0,r_0}, \ \braket{p_0}{p_1}a_{r_0,r_1}, \ \cdots, \ \braket{p_0}{p_{m-1}}a_{r_0,r_{m-1}}, \ \cdots, \ \\
	&\braket{p_{m-1}}{p_0}a_{r_{m-1},r_0}, \ \braket{p_{m-1}}{p_1}a_{r_{m-1},r_1}, \ \cdots, \ \braket{p_{m-1}}{p_{m-1}}a_{r_{m-1},r_{m-1}})^{T}.
	\end{aligned}
	\end{equation*}
	Note that
		\begin{equation*}
	\overline{C}\otimes C=
	\left(\bu_{0,0}^T,
	\bu_{0,1}^T,
	\ldots\\
	\bu_{0,m-1}^T,
	\ldots\\
	\bu_{m-1,0}^T,
	\bu_{m-1,1}^T,
	\ldots\\
	\bu_{m-1,m-1}^T\right)^T
	\end{equation*}
	is still a unitary matrix. Since Eq.~\eqref{eq:orthogonal_1} holds for any
	$i\neq k$, there exists $e_i\in \bbC$ for each $i\in\bbZ_m$ such that
	\begin{equation}\label{eq:X}
	\bX=e_0\bu_{0,0}^{\dagger}+e_1\bu_{1,1}^{\dagger}+\cdots+e_{m-1}\bu_{m-1,m-1}^{\dagger}.
	\end{equation}

	By the condition, we have $a_{r_0,r_j}=0$ for $j\neq 0\in\bbZ_m$, and $b_{i,0}\neq 0$ for $i\in\bbZ_m$. This also implies that $c_{i,0}\neq 0$ for $i\in\bbZ_m$. By Eq.~\eqref{eq:X}, we have
	\begin{equation}\label{eq:expand_0}
	\begin{pmatrix}
	0\\
	\vdots\\
	0\\
	\end{pmatrix}=
	\begin{pmatrix}
	\braket{p_0}{p_1}a_{r_0,r_1}\\
	\vdots\\
	\braket{p_0}{p_{m-1}}a_{r_0,r_{m-1}}\\
	\end{pmatrix}=
	\begin{pmatrix}
	\overline{c_{0,1}} & \overline{c_{1,1}} &\cdots & \overline{c_{m-1,1}}\\
	\vdots &\vdots &\ddots &\vdots\\
	\overline{c_{0,m-1}} & \overline{c_{1,m-1}}  &\cdots &\overline{c_{m-1,m-1}}\\
	\end{pmatrix}
	\begin{pmatrix}
	c_{0,0}e_0\\
	\vdots\\
	c_{m-1,0}e_{m-1}
	\end{pmatrix}.
	\end{equation}
	Since $C^{\dagger}$ is a unitary matrix,  there exists  $k\in \bbC$,
	such that $c_{i,0}e_i=kc_{i,0}$ for $i\in \bbZ_m$.
	Further, since  $c_{i,0}\neq 0$ for $i\in\bbZ_m$, we have 	$e_i=k, \ \text{for} \ i\in\bbZ_m$.
	Then by Eq.~\eqref{eq:X}, we obtain that
	\begin{equation*}
	\braket{p_i}{p_j}a_{r_i,r_j}=\delta_{i,j}k, \ \text{for} \ i,j\in\bbZ_m.
	\end{equation*}
	Thus, if $i\neq j$, and $p_i=p_j$, then we have
	\begin{equation*}
	a_{r_i,r_j}=0.
	\end{equation*}
	Since $E=E^{\dagger}$, we also have $a_{r_j,r_i}=0$.	
	If $i=j$, then we have
	\begin{equation*}
	a_{r_i,r_i}=k,
	\end{equation*}
	i.e.  $a_{r_0,r_0}=a_{r_i,r_i},   \ \text{for} \ j\neq 0\in\bbZ_m.$
\end{proof}
\vspace{0.4cm}		

\section{The proof of Theorem~\ref{thm:OES}}\label{appendix:thm}	
	\begin{proof}
		Let $A_2,A_3,\ldots, A_{N}$ come together to  perform a joint  OPLM $\{E=M^{\dagger}M\}$, where each POVM element can be  written as $E=M^{\dagger}M=(a_{i_1i_2\cdots i_{N-1},j_1j_2\cdots j_{N-1}})_{i_k,j_k\in\bbZ_d, 1\leq k\leq N-1}$ under the basis $\{\ket{i_1}_{A_2}\ket{i_2}_{A_3}\cdots\ket{i_{N-1}}_{A_N}\}_{i_k\in\bbZ_d, 1\leq k\leq N-1}$.  Then the postmeasurement states $\{\bbI_{A_1}\otimes M\ket{\psi}: \ket{\psi}\in \cB_d^N\}$ should be mutually orthogonal.
		
		For an $(i_1,i_2,\ldots,i_{N-1})\in\bbZ_d^{N-1}$,  we denote  $wt(i_1,i_2,\ldots,i_{N-1})$ as the number of nonzero $i_k$ for $1\leq k\leq N-1$. We also define $N$ subsets of $\bbZ_d^{N-1}$,
		\begin{equation}
		\cA_k:=\{(i_1,i_2,\ldots,i_{N-1})\in \bbZ_d^{N-1}\mid wt(i_1,i_2,\ldots,i_{N-1})=k\}, \ k\in\bbZ_N.\\
		\end{equation}
		Note that $\cA_k\cap\cA_\ell=\emptyset$ for $k\neq \ell\in\bbZ_N$, and  $\bbZ_d^{N-1}=\mathop{\bigcup}\limits_{k\in\bbZ_N}\cA_k$. First, we need to show that the off-diagonal elements of $E$ are all zeros. That is
		$a_{i_1i_2\cdots i_{N-1},j_1j_2\cdots j_{N-1}}=0$ for $(i_1,i_2,\ldots, i_{N-1})\neq (j_1,j_2,\ldots, j_{N-1})\in\bbZ_d^{N-1}$. There are two cases.
		
		\begin{enumerate}[(i)]
			\item Assume $(i_1,i_2,\ldots,i_{N-1})\in \cA_k$, $(j_1,j_2,\ldots,j_{N-1})\in\cA_\ell$ for $k\neq \ell\in\bbZ_N$. Then we must have  $\cO_{(0,i_1,i_2,\ldots,i_{N-1})}\cap\cO_{(0,j_1,j_2,\ldots,j_{N-1})}=\emptyset$, and $\cS_{(0,i_1,i_2,\ldots,i_{N-1})}\neq \cS_{(0,i_1,i_2,\ldots,i_{N-1})}$.   Applying Lemma~\ref{lemma:zero} to $\cS_{(0,i_1,i_2,\ldots,i_{N-1})}$ and  $\cS_{(0,i_1,i_2,\ldots,i_{N-1})}$, we obtain that $a_{i_1i_2\cdots i_{N-1},j_1j_2\cdots j_{N-1}}=0$.
			
			\item Assume $(i_1,i_2,\ldots,i_{N-1})$, $(j_1,j_2,\ldots,j_{N-1})\in\cA_k$ for $k\neq 0\in\bbZ_N$.
			\begin{enumerate}[(1)]	
				\item 	First, we consider the case $k=1$. Then there must exist an $\ell$ such that $i_{\ell}\neq 0$ and $i_{m}=0$ for any $1\leq m\leq N-1$ and $m\neq \ell$.  If $\cO_{(0,i_1,i_2,\ldots,i_{N-1})}\cap\cO_{(0,j_1,j_2,\ldots,j_{N-1})}=\emptyset$, then we can also obtain that $a_{i_1i_2\cdots i_{N-1},j_1j_2\cdots j_{N-1}}=0$ for the same discussion as (i). If $\cO_{(0,i_1,i_2,\ldots,i_{N-1})}=\cO_{(0,j_1,j_2,\ldots,j_{N-1})}$, then $\sigma^{\ell}(0,i_1,i_2,\ldots,i_{N-1})=(i_{\ell},0,0,\ldots,0)\in\cO_{(0,i_1,i_2,\ldots,i_{N-1})}$, where $(0,0,\ldots,0)\in \cA_0$. For any $(n_0,n_1,\ldots,n_{N-1})\in \cO_{(0,i_1,i_2,\ldots,i_{N-1})}$, we must have $(n_1,\ldots,n_{N-1})\in \cA_0$ or $\cA_1$. For $(i_1',i_2',\ldots, i_{N-1}')\neq (j_1',j_2',\ldots, j_{N-1}')\in\bbZ_d^{N-1}$,    we have shown that $a_{i_1'i_2'\cdots i_{N-1}',j_1'j_2'\cdots j_{N-1}'}=0$ for  $(i_1',i_2',\ldots,i_{N-1}')\in \cA_0$, $(j_1',j_2',\ldots,j_{N-1}')\in\cA_1$. This implies that $a_{00\cdots 0,n_1n_2\cdots n_{N-1}}=0$ for any  $(n_0,n_1,\ldots,n_{N-1})\neq (i_\ell,0,\ldots,0)\in \cO_{(0,i_1,i_2,\ldots,i_{N-1})}$.    Then we can apply Lemma~\ref{lemma:trivial} to $\cS_{(0,i_1,i_2,\ldots,i_{N-1})}$, and we obtain that $a_{i_1i_2\cdots i_{N-1},j_1j_2\cdots j_{N-1}}=0$  for $(i_1,i_2,\ldots,i_{N-1})$, $(j_1,j_2,\ldots,j_{N-1})\in\cA_1$.
				
				\item	 Next, we consider the case $k=2$.  Then there must exist an  $i_\ell\neq 0$, where  $1\leq \ell\leq N-1$. We only need to consider the case $\cO_{(0,i_1,i_2,\ldots,i_{N-1})}=\cO_{(0,j_1,j_2,\ldots,j_{N-1})}$. We denote $i_0=0$. Then $\sigma^{\ell}(0,i_1,i_2,\ldots,i_{N-1})=(i_\ell,i_{\ell+1},i_{\ell+2},\ldots,i_{\ell-1})\in \cO_{(0,i_1,i_2,\ldots,i_{N-1})}$, where $(i_{\ell+1},i_{\ell+2},\ldots,i_{\ell-1})\in \cA_1$.   For any $(n_0,n_1,\ldots,n_{N-1})\in \cO_{(0,i_1,i_2,\ldots,i_{N-1})}$, we must have $(n_1,\ldots,n_{N-1})\in \cA_1$ or $\cA_2$.  For $(i_1',i_2',\ldots, i_{N-1}')\neq (j_1',j_2',\ldots, j_{N-1}')\in\bbZ_d^{N-1}$,    we have shown that $a_{i_1'i_2'\cdots i_{N-1}',j_1'j_2'\cdots j_{N-1}'}=0$ for  $(i_1',i_2',\ldots,i_{N-1}')\in \cA_1$, $(j_1',j_2',\ldots,j_{N-1}')\in\cA_2$, and $a_{i_1'i_2'\cdots i_{N-1}',j_1'j_2'\cdots j_{N-1}'}=0$ for  $(i_1',i_2',\ldots,i_{N-1}')$, $(j_1',j_2',\ldots,j_{N-1}')\in\cA_1$. This implies that  $a_{i_{\ell+1},i_{\ell+2}\cdots,i_{\ell-1},n_1n_2\cdots n_{N-1}}=0$ for any  $(n_0,n_1,\ldots,n_{N-1})\neq (i_\ell,i_{\ell+1},\ldots,i_{\ell-1})\in \cO_{(0,i_1,i_2,\ldots,i_{N-1})}$.  Then we can apply Lemma~\ref{lemma:trivial} to $\cS_{(0,i_1,i_2,\ldots,i_{N-1})}$, we obtain that $a_{i_1i_2\cdots i_{N-1},j_1j_2\cdots j_{N-1}}=0$  for $(i_1,i_2,\ldots,i_{N-1})$, $(j_1,j_2,\ldots,j_{N-1})\in\cA_2$.
				
				\item	 By repeating this process $N-2$ times, we obtain that $a_{i_1i_2\cdots i_{N-1},j_1j_2\cdots j_{N-1}}=0$,  for $(i_1,i_2,\ldots,i_{N-1})$, $(j_1,j_2,\ldots,j_{N-1})\in\cA_k$ for $1\leq k\leq N-2$.
				
				\item	 Finally, we consider the case $k=N-1$. If $(i_1,i_2,\ldots,i_{N-1})\neq (j_1,j_2,\ldots,j_{N-1})\in\cA_{N-1}$, then
				$\cO_{(0,i_1,i_2,\ldots,i_{N-1})}\cap\cO_{(0,j_1,j_2,\ldots,j_{N-1})}=\emptyset$. By the same discussion as (i),  we obtain that $a_{i_1i_2\cdots i_{N-1},j_1j_2\cdots j_{N-1}}=0$,  for $(i_1,i_2,\ldots,i_{N-1})$, $(j_1,j_2,\ldots,j_{N-1})\in\cA_{N-1}$.
			\end{enumerate}
		\end{enumerate}
		
		In a word, we know that the off-diagonal elements of $E$ are all zeros. Next, we consider the diagonal elements of $E$.
		
		For any $(i_1,i_2,\ldots,i_{N-1})\in \cA_k$ for $k\neq 0\in\bbZ_N$, there must exist an $(n_{(0,k-1)},n_{(1,k-1)},\ldots,n_{(N-1,k-1)})\in \cO_{(0,i_1,i_2,\ldots,i_{N-1})}$, such that $(n_{(1,k-1)},\ldots,n_{(N-1,k-1)})\in\cA_{k-1}$. Applying Lemma~\ref{lemma:trivial} to $S_{(0,i_1,i_2,\ldots,i_{N-1})}$, we obtain that $a_{i_1i_2\cdots i_{N-1},i_1i_2\cdots i_{N-1}}=a_{n_{(1,k-1)}n_{(2,k-1)}\cdots n_{(N-1,k-1)}, n_{(1,k-1)}n_{(2,k-1)}\cdots n_{(N-1,k-1)}}$. Next, there  must exist an $(n_{(0,k-2)},n_{(1,k-2)},\ldots,n_{(N-1,k-2)})\in \cO_{(0,n_{(1,k-1)},\ldots,n_{(N-1,k-1)})}$, such that $(n_{(1,k-2)},\ldots,n_{(N-1,k-2)})\in\cA_{k-2}$. Applying Lemma~\ref{lemma:trivial} to $ S_{(0,n_{(1,k-1)},\ldots,n_{(N-1,k-1)})}$, we obtain that $a_{n_{(1,k-1)}n_{(2,k-1)}\cdots n_{(N-1,k-1)}, n_{(1,k-1)}n_{(2,k-1)}\cdots n_{(N-1,k-1)}}=a_{n_{(1,k-2)}n_{(2,k-2)}\cdots n_{(N-1,k-2)}, n_{(1,k-2)}n_{(2,k-2)}\cdots n_{(N-1,k-2)}}$. By repeating this process $k$ times, we obtain that  $a_{i_1i_2\cdots i_{N-1},i_1i_2\cdots i_{N-1}}=a_{n_{(1,0)}n_{(2,0)}\cdots n_{(N-1,0)}, n_{(1,0)}n_{(2,0)}\cdots n_{(N-1,0)}}$, where $(n_{(1,0)},n_{(2,0)},\cdots, n_{(N-1,0)})\in\cA_0$. That is   $a_{i_1i_2\cdots i_{N-1},i_1i_2\cdots i_{N-1}}=a_{00\cdots 0, 00\cdots 0}$. We obtain that the diagonal elements of $E$ are all equal.  Therefore, $E$ is trivial. This completes the proof.
	\end{proof}

\section{The proof of Lemma~\ref{pro:OGES_dddd}}\label{appendix:OGES_dddd}
 Since

\begin{equation*}
\begin{aligned}
\bbX_d^4=&\cO_{(0,0,0,0)}\bigcup\left(\mathop{\cup}\limits_{i\in[d-1]}(\cO_{(0,0,0,i)}\cup\cO_{(0,0,i,i)}\cup\cO_{(0,i,0,i)})\right)\bigcup \left(\mathop{\cup}\limits_{ i\neq  j\in [d-1]}\cO_{(0,0,i,j)}\right)
\bigcup\left(\mathop{\cup}\limits_{p<q\in [d-1]}\cO_{(0,p,0,q)}\right)\\&\bigcup\left(\mathop{\cup}\limits_{k,\ell,m\in [d-1]}\cO_{(0,k,\ell,m)}\right),
\end{aligned}
\end{equation*}
we obtain that
\begin{equation*}
\begin{aligned}	
\overline{\cB_d^4}=&\cS_{(0,0,0,0)}\bigcup\left(\mathop{\cup}\limits_{i\in[d-1]}(\cS_{(0,0,0,i)}\cup\overline{\cS_{(0,0,i,i)}}\cup\cS_{(0,i,0,i)})\right)\bigcup \left(\mathop{\cup}\limits_{ i\neq  j\in [d-1]}\cS_{(0,0,i,j)}\right)\bigcup\left(\mathop{\cup}\limits_{p<q\in [d-1]}\cS_{(0,p,0,q)}\right)\\&\bigcup\left(\mathop{\cup}\limits_{k,\ell,m\in [d-1]}\cS_{(0,k,\ell,m)}\right),
\end{aligned}
\end{equation*}
where

\begin{equation*}
\begin{aligned}
\cS_{(0,0,0,0)}=&\{\ket{0}_{A_1}\ket{0}_{A_2}\ket{0}_{A_3}\ket{0}_{A_4}\pm\ket{1}_{A_1}\ket{1}_{A_2}\ket{1}_{A_3}\ket{1}_{A_4}\},\\
\cS_{(0,0,0,i)}=&\{\ket{0}_{A_1}\ket{0}_{A_2}\ket{0}_{A_3}\ket{i}_{A_4}+w_4^{s}\ket{0}_{A_1}\ket{0}_{A_2}\ket{i}_{A_3}\ket{0}_{A_4}+w_4^{2s}\ket{0}_{A_1}\ket{i}_{A_2}\ket{0}_{A_3}\ket{0}_{A_4}+\\&w_4^{3s}\ket{i}_{A_1}\ket{0}_{A_2}\ket{0}_{A_3}\ket{0}_{A_4}: s\in\bbZ_4\},\\
\overline{\cS_{(0,0,i,i)}}=&\{b_{s,0}\ket{0}_{A_1}\ket{0}_{A_2}\ket{i}_{A_3}\ket{i}_{A_4}+b_{s,1}\ket{0}_{A_1}\ket{i}_{A_2}\ket{i}_{A_3}\ket{0}_{A_4}+b_{s,2}\ket{i}_{A_1}\ket{i}_{A_2}\ket{0}_{A_3}\ket{0}_{A_4}\\&+b_{s,3}\ket{i}_{A_1}\ket{0}_{A_2}\ket{0}_{A_3}\ket{i}_{A_4}: s\in\bbZ_4\},\\
\cS_{(0,i,0,i)}=&\{\ket{0}_{A_1}\ket{i}_{A_2}\ket{0}_{A_3}\ket{i}_{A_4}\pm\ket{i}_{A_1}\ket{0}_{A_2}\ket{i}_{A_3}\ket{0}_{A_4}\},\\
\cS_{(0,0,i,j)}=&\{\ket{0}_{A_1}\ket{0}_{A_2}\ket{i}_{A_3}\ket{j}_{A_4}+w_4^s\ket{0}_{A_1}\ket{i}_{A_2}\ket{j}_{A_3}\ket{0}_{A_4}+w_4^{2s}\ket{i}_{A_1}\ket{j}_{A_2}\ket{0}_{A_3}\ket{0}_{A_4}\\&+w_4^{3s}\ket{j}_{A_1}\ket{0}_{A_2}\ket{0}_{A_3}\ket{i}_{A_4}: s\in \bbZ_4\},\\
\cS_{(0,p,0,q)}=&\{\ket{0}_{A_1}\ket{p}_{A_2}\ket{0}_{A_3}\ket{q}_{A_4}+w_4^s\ket{p}_{A_1}\ket{0}_{A_2}\ket{q}_{A_3}\ket{0}_{A_4}+w_4^{2s}\ket{0}_{A_1}\ket{q}_{A_2}\ket{0}_{A_3}\ket{p}_{A_4}\\&+w_4^{3s}\ket{q}_{A_1}\ket{0}_{A_2}\ket{p}_{A_3}\ket{0}_{A_4}: s\in\bbZ_4\},\\
\cS_{(0,k,\ell,m)}=&\{\ket{0}_{A_1}\ket{k}_{A_2}\ket{\ell}_{A_3}\ket{m}_{A_4}+w_4^s\ket{k}_{A_1}\ket{\ell}_{A_2}\ket{m}_{A_3}\ket{0}_{A_4}+w_4^{2s}\ket{\ell}_{A_1}\ket{m}_{A_2}\ket{0}_{A_3}\ket{k}_{A_4}\\&+w_4^{3s}\ket{m}_{A_1}\ket{0}_{A_2}\ket{k}_{A_3}\ket{\ell}_{A_4}: s\in\bbZ_4\}.\\
\end{aligned}
\end{equation*}
for  $i\neq j\in[d-1]$, $p<q\in[d-1]$ and $k,\ell,m\in[d-1]$. Note that
	\begin{equation*}
\widetilde{B}=(b_{i,j})_{i,j\in\bbZ_4}:=\begin{pmatrix}
1 &1 &1 &2\\
1 &-1 &2 &-1\\
5 &5 &-2 &-4\\
5 &-5 &-4 &2\\
\end{pmatrix}.
\end{equation*}
\vspace{0.4cm}

			\begin{proof}
		Since $|\cB_d^4|=d^4-(d-1)^4+1$,  and $|\overline{\cS_{(0,0,i,i)}}|=|\cS_{(0,0,i,i)}|$ for $i\in[d-1]$, we obtain that $|\overline{\cB_d^4}|=|\cB_d^4|=d^4-(d-1)^4+1$.  By Theorem~\ref{thm:OES}, we know that $\cB_d^4$ is strongly nonlocal. Since $\overline{\cB_d^4}$ has a similar structure to $\cB_d^4$. We can also obtain that  $\overline{\cB_d^4}$ is strongly nonlocal. We only need to show that  $\overline{\cB_d^4}$ is an OGES.  It is easy to see that any state in $\cS_{(0,0,0,0)}$ and $\cS_{(0,i,0,i)}$ is LU-equivalent to a $\ghz$ state for $i\in [d-1]$, and any state in $\cS_{(0,0,0,i)}$ is LU-equivalent to a W state for $i\in [d-1]$. For the set $\cS_{(0,0,i,j)}$, $\cS_{(0,p,0,q)}$ and $\cS_{(0,k,\ell,m)}$, we need to use the following claim.
		
		\noindent	\textbf{Claim}: For a  state $\ket{\psi}\in(\bbC^d)^{\otimes N}$, if there exists a product operator $P_1\otimes P_2\otimes \cdots \otimes P_N$ such that
		\begin{equation*}
		P_1\otimes P_2\otimes \cdots\otimes P_N \ket{\psi}
		\end{equation*}
		is a genuinely entangled state, then $\ket{\psi}$ is also a  genuinely entangled state.

		The proof of the above claim is as follows. If $\ket{\psi}$ is not a genuinely entangled state, then there exists a bipartition
		$A|B$ such that $\ket{\psi}=\ket{\psi}_A\otimes\ket{\psi}_B $. For any product operator $P_1\otimes P_2\otimes \cdots \otimes P_N$,
		\begin{equation*}
		P_1\otimes P_2\otimes \cdots\otimes P_N (\ket{\psi}_A\otimes\ket{\psi}_B)
		\end{equation*} is still separable in $A|B$ bipartition, which is not a genuinely entangled state. Contradiction. This completes the proof of this claim.
		
		For  $s\in \bbZ_4$, and $i\neq j\in[d-1]$, let
		\begin{equation*}
		\ket{\psi_s}=\ket{0}_{A_1}\ket{0}_{A_2}\ket{i}_{A_3}\ket{j}_{A_4}+w_4^s\ket{0}_{A_1}\ket{i}_{A_2}\ket{j}_{A_3}\ket{0}_{A_4}+w_4^{2s}\ket{i}_{A_1}\ket{j}_{A_2}\ket{0}_{A_3}\ket{0}_{A_4}+w_4^{3s}\ket{j}_{A_1}\ket{0}_{A_2}\ket{0}_{A_3}\ket{i}_{A_4}\in\cS_{(0,0,i,j)},
		\end{equation*}	
		and
		\begin{equation*}
		P=(\ketbra{0}{0}+\ketbra{0}{j}+w_4^{-2s}\ketbra{1}{i})_{A_1}(\ketbra{0}{0}+\ketbra{0}{j}+w_4^{-s}\ketbra{1}{i})_{A_2}	(\ketbra{0}{0}+\ketbra{0}{j}+\ketbra{1}{i})_{A_3}(\ketbra{0}{0}+\ketbra{0}{j}+w_4^{-3s}\ketbra{1}{i})_{A_4}.
		\end{equation*}
		Then
		\begin{equation*}
		P\ket{\psi_s}=\ket{0}_{A_1}\ket{0}_{A_2}\ket{1}_{A_3}\ket{0}_{A_4}+\ket{0}_{A_1}\ket{1}_{A_2}\ket{0}_{A_3}\ket{0}_{A_4}+\ket{1}_{A_1}\ket{0}_{A_2}\ket{0}_{A_3}\ket{0}_{A_4}+\ket{0}_{A_1}\ket{0}_{A_2}\ket{0}_{A_3}\ket{1}_{A_4}=\ket{W}_2^4,
		\end{equation*}
		which is a W-state. By the above claim,  $\ket{\psi_s}$ is a genuinely entangled state.  In the same way, we can also show that any state in $\cS_{(0,p,0,q)}$ and  $\cS_{(0,k,\ell,m)}$ is a genuinely entangled state for $p<q\in[d-1]$ and $k,\ell,m\in[d-1]$.

		Finally, we consider $\overline{\cS_{(0,0,i,i)}}$. For $s\in \bbZ_4$ and $i\in[d-1]$, let
		\begin{equation*}
		\begin{aligned}
		\ket{\lambda_s}=&b_{s,0}\ket{0}_{A_1}\ket{0}_{A_2}\ket{i}_{A_3}\ket{i}_{A_4}+b_{s,1}\ket{0}_{A_1}\ket{i}_{A_2}\ket{i}_{A_3}\ket{0}_{A_4}+b_{s,2}\ket{i}_{A_1}\ket{i}_{A_2}\ket{0}_{A_3}\ket{0}_{A_4}+b_{s,3}\ket{i}_{A_1}\ket{0}_{A_2}\ket{0}_{A_3}\ket{i}_{A_4}\in\overline{\cS_{(0,0,i,i)}}.
		\end{aligned}
		\end{equation*}
		We can calculate that $\rank(\ket{\lambda_s}_{A_1|A_2A_3A_4})=\rank(\ket{\lambda_s}_{A_2|A_3A_4A_1})=\rank(\ket{\lambda_s}_{A_3|A_4A_1A_2})=\rank(\ket{\lambda_s}_{A_4|A_1A_2A_3})=\rank(\ket{\lambda_s}_{A_1A_3|A_2A_4})= 2$, and
		$\rank(\ket{\lambda}_{A_1A_2|A_3A_4})=\rank(\ket{\lambda}_{A_1A_4|A_2A_3})=4$. This implies that $\ket{\lambda_s}$ is entangled in every bipartition, and it is a genuinely entangled state.
		
		In a word,  $\overline{B_d^4}$ is an OGES. This completes the proof.		
	\end{proof}
\vspace{0.4cm}	

		\begin{table*}[t]
	\renewcommand\arraystretch{1.7}	
	\caption{Off-diagonal elements and Diagonal elements of $E=M^{\dagger}M=(a_{ij,k\ell})_{i,j,k,\ell\in\bbZ_3}$. We apply Lemma~\ref{lemma:zero}  to two sets among $\{\cA_i\}_{i=1}^6$,  and apply Lemma~\ref{lemma:trivial}  to one set among $\{\cA_i\}_{i=1}^6$. }\label{Table:333}
	\centering
	\renewcommand\tabcolsep{5pt}
	\begin{tabular}{ll|ll}
		\midrule[1.1pt]
		Sets &Elements 	&Sets &Elements \\
		\midrule[1.1pt]
		$\cA_1$, $\cA_2$	  &$a_{00,02}=a_{00,20}=a_{00,22}=0$ &$\cA_2$, $\cA_5$   &$a_{02,21}=a_{20,21}=a_{00,10}=0$\\
		$\cA_1$, $\cA_3$   &$a_{00,11}=a_{10,11}=a_{01,11}=0$   &$\cA_3$, $\cA_4$   &$a_{11,12}=a_{10,20}=a_{01,20}=0$\\
		$\cA_1$, $\cA_4$   &$a_{00,12}=a_{11,20}=a_{01,22}=0$   &$\cA_3$, $\cA_5$   &$a_{11,21}=a_{02,10}=a_{01,02}=0$\\
		$\cA_1$, $\cA_5$   &$a_{00,21}=a_{02,11}=a_{10,22}=0$  	&$\cA_4$, $\cA_5$   &$a_{12,21}=a_{02,20}=a_{01,10}=0$\\
		$\cA_1$, $\cA_6$   &$a_{11,22}=a_{21,22}=a_{12,22}=0$ 	&$\cA_4$, $\cA_6$   &$a_{20,22}=a_{01,21}=a_{01,12}=0$\\
		$\cA_2$, $\cA_4$   &$a_{02,12}=a_{12,20}=a_{00,01}=0$ 	&$\cA_5$, $\cA_6$   &$a_{02,22}=a_{10,21}=a_{10,12}=0$\\
		\midrule[1.1pt]
		$\cA_1$   &$a_{00,00}=a_{11,11}=a_{22,22}$  &$\cA_4$   &$a_{12,12}=a_{20,20}$\\
		$\cA_2$   &$a_{02,02}=a_{20,20}=a_{00,00}$ &	$\cA_5$   &$a_{02,02}=a_{21,21}$\\
		$\cA_3$   &$a_{11,11}=a_{10,10}=a_{01,01}$\\
		\midrule[1.1pt]
	\end{tabular}
\end{table*}

\section{A strongly nonlocal OGES of size $18$ in $\bbC^3\otimes\bbC^3\otimes\bbC^3$}\label{appendix:OGES_333}	
We denote	
		\begin{equation}
	\begin{aligned}
	\cA_1&:=\{\ket{0}_{A_1}\ket{0}_{A_2}\ket{0}_{A_3}+w_3^s\ket{1}_{A_1}\ket{1}_{A_2}\ket{1}_{A_3}+w_3^{2s}\ket{2}_{A_1}\ket{2}_{A_2}\ket{2}_{A_3}\mid s\in\bbZ_3\},\\
	\cA_2&:=\{\ket{0}_{A_1}\ket{0}_{A_2}\ket{2}_{A_3}+w_3^s\ket{0}_{A_1}\ket{2}_{A_2}\ket{0}_{A_3}+w_3^{2s}\ket{2}_{A_1}\ket{0}_{A_2}\ket{0}_{A_3}\mid s\in\bbZ_3\},\\
	\cA_3&:=\{\ket{0}_{A_1}\ket{1}_{A_2}\ket{1}_{A_3}+w_3^s\ket{1}_{A_1}\ket{1}_{A_2}\ket{0}_{A_3}+w_3^{2s}\ket{1}_{A_1}\ket{0}_{A_2}\ket{1}_{A_3}\mid s\in\bbZ_3\},\\
	\cA_4&:=\{\ket{0}_{A_1}\ket{1}_{A_2}\ket{2}_{A_3}+w_3^s\ket{1}_{A_1}\ket{2}_{A_2}\ket{0}_{A_3}+w_3^{2s}\ket{2}_{A_1}\ket{0}_{A_2}\ket{1}_{A_3}\mid s\in\bbZ_3\},\\
	\cA_5&:=\{\ket{1}_{A_1}\ket{0}_{A_2}\ket{2}_{A_3}+w_3^s\ket{0}_{A_1}\ket{2}_{A_2}\ket{1}_{A_3}+w_3^{2s}\ket{2}_{A_1}\ket{1}_{A_2}\ket{0}_{A_3}\mid s\in\bbZ_3\},\\
	\cA_6&:=\{\ket{1}_{A_1}\ket{2}_{A_2}\ket{2}_{A_3}+w_3^s\ket{2}_{A_1}\ket{2}_{A_2}\ket{1}_{A_3}+w_3^{2s}\ket{2}_{A_1}\ket{1}_{A_2}\ket{2}_{A_3}\mid s\in\bbZ_3\}.\\
	\end{aligned}
	\end{equation}
	\vspace{0.4cm}
Then $\mathop{\cup}\limits_{1\leq i\leq 6}\cA_i$ is a strongly nonlocal OGES. The proof  is as follows. It is easy to see that any state in $\cA_1$, $\cA_4$ and $\cA_5$ is equivalent to a $\ghz$ state, and any state in $\cA_2$, $\cA_3$ and $\cA_6$	is equivalent to a W state. Thus $\mathop{\cup}\limits_{1\leq i\leq 6}\cA_i$ is an OGES. Next, we show the strong nonlocality.

  	Let $A_2$ and $A_3$ come together to  perform a joint  OPLM $\{E=M^{\dagger}M\}$, where each POVM element can be  written as $E=M^{\dagger}M=(a_{ij,k\ell})_{i,j,k,\ell\in\bbZ_3}$ under the basis $\{\ket{i}_{A_2}\ket{j}_{A_3}\}_{i,j\in\bbZ_3}$.  Then the postmeasurement states $\{\bbI_{A_1}\otimes M\ket{\psi}\mid\ket{\psi}\in \mathop{\cup}\limits_{1\leq i\leq 6}\cA_i\}$ should be mutually orthogonal. We can show that $E\propto \bbI$ from Table~\ref{Table:333}. This completes this proof.

\twocolumngrid	

\end{document}